\numberwithin{equation}{section}
\newtheorem{thm}{Theorem}[section]
\newtheorem{lem}[thm]{Lemma}
\newtheorem{rem}[thm]{Remark}
\newcommand\N{{\mathbb N}}
\newcommand\E{{\mathbb E}}
\newcommand\R{{\mathbb R}}
\newcommand\1{{\mathds{1}}}   
\newcommand\HH{{\mathcal H}}
\newcommand\A{{\mathcal A}}
\newcommand\B{{\mathcal B}}
\newcommand\D{{\mathcal D}}
\newcommand\GG{{\mathcal G}}
\newcommand\LL{{\mathcal L}}
\newcommand\SSS{{\mathcal S}}
  \newcommand\z{\zeta}
 \newcommand\e{\epsilon}
\newcommand\g{\gamma}
\newcommand\s{\sigma}
\let\a=\alpha
\let\b=\beta
\newcommand{\nada}[1]{}
\newcommand\ks{\boldsymbol k}
\begin{document}
 \title[Spectral properties]
 {Spectral properties of integral operators \\in bounded,  large intervals.}
 \author{Enza Orlandi}\thanks{E.O.  thanks  the anonymous referee for  pointing  out  an error in the computation of the Cheeger's constant  in an earlier version of the article. \\
Sadly, the first author died before finishing the revision of the paper. As the paper was almost complete and C.L. had some knowledge of its content, he decided to finish it in memory of a splendid person.}
\address{Enza Orlandi, 
Dipartimento di Matematica e Fisica\\
Universit\`a  di Roma Tre\\
 L.go S.Murialdo 1, 00156 Roma, Italy. }
 \author{Carlangelo Liverani}
\address{Carlangelo Liverani\\
  Dipartimento di Matematica\\
  II Universit\`{a} di Roma (Tor Vergata)\\
  Via della Ricerca Scientifica, 00133 Roma, Italy.}
\email{{\tt liverani@mat.uniroma2.it}}
\date{\today}
\begin{abstract}
We study  the spectrum of  one dimensional  integral operators in  bounded  real intervals of  length  $2L$, for  value  of  $L$ large. The integral operators    are obtained by linearizing   a non local evolution equation for a non conserved order parameter describing the phases of a fluid.  We prove a  Perron-Frobenius theorem showing that there is an isolated, simple minimal eigenvalue   strictly positive  for $L$ finite,  going   to zero exponentially fast in $L$.
 We   lower bound,    uniformly    on  $L$,  the   spectral  gap  by applying a generalization of the Cheeger's inequality.
These   results  are  needed for deriving spectral properties  for non local Cahn-Hilliard  type of equations  in problems of interface dynamics, see \cite {O2}.
\end{abstract}
\keywords{Integral operators,   Cheeger's constant, interfaces }
\subjclass{ Primary 60J25; secondary 82A05}

\maketitle

\section{Introduction}   We study the spectrum of an integral operator  acting on  $L^2$ functions defined in intervals $[-L,L] \subset \R$,  for  value  of  $L$ large.
This problem arises  when analyzing 
   layered  equilibria and front dynamics 
 for   the  conservative, 
  nonlocal, quasilinear  evolution equation   typified by 
\begin{equation} \label {1.0} \partial_t m(t,x)    =  \nabla\cdot\bigl \{
\nabla  m(t,x)  - 
  {\beta (1-m(t,x)^2) (J\star
\nabla m )(t,x)} \bigr\},  \end {equation} 
where $\beta>1$,
 $$ (J \star  m)(x)= \int_{\R} J (x, y) m(y) {\rm d}y  $$
and  $J(\cdot, \cdot)$  is a regular, symmetric, translational invariant,  non negative function   with compact support and integral equal to one.
This equation   \eqref {1.0} first appeared
in the literature in a paper  \cite{LOP}  on the dynamics of Ising systems
with a long--range interaction and so--called ``Kawasaki'' or ``exchange''
dynamics and  later it was rigorously
derived  in \cite{GL1}. 
 In this physical context,
$m(x,t) \in [-1,1]$ 
  is the spin magnetization density. 
 It has been formally shown by  Giacomin and Lebowitz   \cite{GL2},  that in the sharp interface limit, i.e 
  the limit in which the phase domain  is  very large with
respect to the size of the interfacial region and time is suitably rescaled,   the limit motion is given by Mullins Sekerka motion,
 a quasi-static free
boundary problem in which the mean curvature of the interface plays a fundamental
role.
Equation \eqref  {1.0}  could  be considered as a   non local type of Cahn-Hilliard equation.   Our intention is to  provide basic spectral estimates useful for deriving  higher dimensions spectral results  in order to establish rigorously the relation between \eqref {1.0}  and the  singular limit motion described by the    Mullins Sekerka equations, see \cite {O2}. 
We recall some  previous  results useful to    better  contextualize   the problem.
 When $\beta>1$  there is a phase transition in the underlying spin system, \cite {LP}. The pure phases correspond to the stationary spatially homogeneous solutions  of   \eqref {1.0} satisfying
 $$ m = \tanh  \b m.$$
 For $ \beta>1$ there are three and only three different roots denoted
 $$ \pm m_\beta, 0.$$ 
 The two phases $\pm m_\beta$ are thermodynamically stable while $m=0$ is unstable.  These statements, established in the context of the theory of Equilibrium Statistical Mechanics, see  \cite {LP}, are reflected by the corresponding stability properties of the space homogeneous solution of \eqref {1.0}, see \cite{GL2}.  
 Equations \eqref   {1.0} has also stationary solutions connecting the two coexisting phases: they are all identical modulo translations and reflection,  see  \cite{GL2},   to the "istanton"  $\bar m( \cdot)$ which is $C^\infty (\R)$, strictly increasing, antisymmetric function which identically verifies
  \begin{equation} \label {G2a}
 \bar m(x) = 
 \tanh \b (J \star \bar  m)(x), \qquad x \in \R .
 \end {equation} 
$\bar m(\cdot)$ is the stationary pattern that connects the minus and the plus phases as
 \begin{equation} \label {G3a}
\lim_{x \to \pm \infty}  \bar m(x)= \pm m_\b,   \end {equation} 
 and  it can be interpreted as a diffuse interface. 
 The first results on these stationary patterns were obtained when analyzing  the non conservative equation 
 \begin{equation} \label {G1a}
  \partial_t m(t,x)= -m(t,x)+
 \tanh \b (J \star  m)(t,x).   
 \end {equation} 
  Equation \eqref {G1a} has been derived from the Glauber  (non conservative) dynamic of  an Ising spin system interacting via a Kac potential, see   \cite {DOPT1}.
  Since  both the equations \eqref {1.0} and \eqref {G1a}  have been derived from the same Ising spin systems, the first by a conservative dynamic  while the latter by a non conservative one,   both   have  as equilibrium solutions 
  the homogeneous  solution $\pm m_\beta$ and the  stationary patterns  connecting the two homogeneous phases.
 Stability properties of  $\bar m$ have been derived both for the conservative evolution \eqref {1.0}  (see \cite {CCO1},  \cite {CCO2}, \cite {CO}) and for  the nonconservative evolution \eqref {G1a} (see \cite {DOPTR}). 
 
 Let us  recall few  previous  results which are needed in this paper. 
    As proved in \cite {DOPTR} the interface described by the istanton is ``stable"  for equation \eqref {G1a} and  any initial datum ``close to the instanton" is attracted and eventually converges exponentially  to some translate of the instanton.
Linearizing     the evolution equation \eqref {G1a}  at  $  \bar m$  one obtains 
the integral operator 
\begin{equation} \label {G3} \LL v =  v - \beta (1 -\bar m^2) J \star v   \end {equation} 
which is selfadjoint  when $v \in L^2 ( \R, \frac 1 { \beta (1 -\bar m ^2)} dx)$.
The spectrum of this operator has been studied in 
  \cite {DOPTE}. It  has been proved that the spectrum of $\LL$ is positive, the lower bound of the spectrum is 0
  which is an eigenvalue of multiplicity one and the corresponding eigenfunction is $ \bar m' (\cdot)$, the spatial derivative of the stationary solution $\bar m$, 
  i.e 
 \begin{equation} \label {G4} \LL \bar m'  =0.   \end {equation} 
    The remaining part of the spectrum is strictly bigger  than some positive number. 
  In this paper  we consider operators  of the type of  the operator  $\LL$ defined in \eqref {G3} but acting   over functions in  bounded intervals $[-L,L]$, $L$ large. In fact, these are the operators naturally arising in the study of higher dimensional non local  Cahn-Hilliard type equations.  
  
Our goal is to derive qualitative and quantitative  properties (in terms of $L$)  of the principal eigenvalue together with the associated eigenfunctions  and, most importantly, to show that the second eigenvalue can be bounded uniformly from below in terms of  the  length of the interval. 

Remark that the above informations  cannot be derived  from the knowledge  of the spectrum of $\LL$ 
  in  $L^2 ( \R, \frac 1 { \beta (1 -\bar m ^2)} dx)$, obtained in  \cite {DOPTE}.  Furthermore the method (Fourier analysis and Weyl's Theorem) used in  \cite {DOPTE} is not applicable to the present context, hence the need to develop a different, more flexible, approach. 
  
It might be helpful to compare heuristically what we are doing  with  similar problems  analysed previously  in the context of  reaction diffusion equations and Cahn-Hilliard equations.
 Dividing by  $\beta (1 -\bar m ^2)$   the operator $\LL$  we can define a new operator 
  $$ \GG v =  \frac  v { \beta (1 -\bar m^2)} -  J \star v =  - \left [ J \star v -v\right ]  +  f''(\bar m) v $$
  where $$f''(\bar m) = -1 +\frac  1 { \beta (1 -\bar m^2)} $$
  and   
  $$ f(m) = -  \frac 12  m^2 + \frac 1 \beta \left [ \left  (\frac {1+   m} 2 \right )\ln \left (\frac {1+   m} 2 \right)+   \left (\frac {1 -  m} 2 \right )\ln \left (\frac {1-   m} 2 \right) \right ] 
$$
 is a double  equal well potential. 
 The operator $\GG $ on $L^2 ( \R, dx)$ and the operator $\LL$  on  $L^2 ( \R, \frac 1 { \beta (1 -\bar m ^2)} dx)$  have the same spectrum. 
 Assume that $v$ is smooth, taking into account that  $J$  is symmetric  and therefore the first  moment is null,
 we have that $J \star v -v \simeq \Delta v$.
 Heuristically $  - \left [ J \star v -v\right ]  +  f''(\bar m) v$  is equivalent  to   $ -\Delta v  +  f''(\bar m) v.$
 So the problem we are dealing with is in the same spirit of the problem dealt by  De Mottoni and  Schatzman, see   \cite [subsection 5.4] {deMS}.  They  studied the spectrum of  $ -\Delta v  +  W(\bar \theta) v$  in  the  finite interval $[-L,L]$ with Neumann boundary conditions.  We denoted by $W(\bar \theta)$ the corresponding of  $f''(\bar m)$ in  \cite {deMS}. This was a basic result used to obtain higher dimension spectral results for the Cahn-Hilliard equations, see for example \cite{Chen} and \cite {AF}.
 In this paper we establish results for the spectrum of one dimensional integral operator in  the  finite interval $[-L,L]$.
 The  main difficulty  is to show that the spectral  gap   of our integral operator
is  bounded uniformly  in $L$.  This is achieved by applying a generalization of Cheeger's inequality, proven in \cite {LS} and  lower bounding in our context  the Cheeger's constant.

\section { Notations and Results} 

Let $T_L =[-L,L]$ be a real interval, 
  $L  \ge 1 $.    We are actually  interested  in   $L$  large. 
 \subsection {The interaction}  \label{sec:interaction}
Let $J \in C^\infty(\R,\R)$,  be a symmetric probability kernel, that is: $J(x)=J(-x)$, $J\geq 0$ and  $ \int J (x) {\rm d}x =1$.  We assume that  $ J(x)>0$ for all $x\in (-1,1)$ and $J(x)=0$ for $|x|\geq 1$.\footnote{ Note that the choice of $[-1,1]$ as the support of $J$, i.e. the interaction length, is tantamount to a choice of the unit of measure. Indeed, the problem is scale invariant and the only quantity that really matters is the ratio between the range of the interaction (in this case 1) and the length of the interval (in this case $L$).} To    define  the interaction between   $x$ and $y$ in $\R$  we   set, by an abuse of notation, $J(x,y) = J(y-x)$.    For a function $v$ defined on  $T_L$  we  set 
    \begin{equation}
 \label{tm1}  (J \star_b v) (x) = \int_{T_L} J(x-y) v(y) {\rm d}y. \end{equation} 
 The suffix $b$ is to reminds the reader that the integral is on  the  bounded interval $T_L$.
       Notice    $ \int_{T_L} J  (x,y) dy = b(x)$ with $ b(x) \in [\frac 12, 1]$ for $x \in T_L$.
There are  other     ways    to  derive from $J$ an integral kernel acting only on  functions on the bounded interval $T_L$.   One  is the  following
   \begin{equation}
 \label{tm1a} J^{neum} (x,y)= J(x,y) + J(x, 2L-y) + J(x, -2L-y),  \end{equation} 
 where $2L-y$  is the image of $y$ under reflection  on the  right boundary $\{L\}$ and 
 $-2L-y$  is the image of $y$ under reflection  on the  left  boundary $\{-L\}$.
   By the assumption on $J$,     $ J^{neum}(x,y)=J^{neum}(y,x)$ and  $ \int J^{neum} (x,y)  {\rm d}y=1$ for all $x \in T_L$.
The choice  to define by boundary reflections the interaction \eqref {tm1a} has  the advantage  to keep $J^{neum}$   a symmetric probability kernel.   This definition  first appeared in the paper  \cite [Section 2] {DOP}  and it was  called there 
  ``Neumann" interaction.  In   \cite {DOP} the authors studied   spectral properties of  operators  closely related to the operator  $\LL$, see \eqref {G3},  defined on the space  of the continuous symmetric functions on $\R$, $ C^{\rm {sym}} (\R)$.         
   
 We will  consider  in this paper  operators    with the integral kernel   \eqref {tm1}  acting on Hilbert spaces.
 We       could   denote    \eqref {tm1}   the Dirichelet interaction  kernel.      Our results  can be, with minor modifications,  immediately extended to the case  when  the integral kernel is  $J^{neum}$.  
   
 \subsection {The  istanton}
 We  call istanton  the  antisymmetric  solution        $\bar m$  of \eqref {G2a} with conditions at infinity given in \eqref {G3a}. The function  $ \bar m \in C^\infty (\R)$, it  is strictly 
increasing,  and  there exist $a>0$, $\alpha_0> \alpha >0$ and $c>0$ so that 
\begin{equation} \label{decay} \begin {split}
&0  <  m_\beta^2 - \bar  m ^2(x)  \le ce^{-\alpha |x|}\ ,  \cr  
  &  \quad | {\bar m}'  (x)- a \a  e^{-\alpha |x|}| \le    ce^{-\alpha_0 |x|}. 
\end {split}\end{equation} 
A proof of these estimates      and     related  results  can be found in   \cite[Chapter 8, Section 8.2]{Pr}.

\subsection {The Operator} For  $\beta>1$ 
set 
 $p(x)= \beta (1-\bar m^2(x)
)$  
 where $\bar m$ is the    istanton.   By the   properties of  $ \bar m $  we have  that 
 \begin {equation} \label {S.10a} \lim_{|x| \to \infty} p(x)= \beta (1-m^2_{\beta}) <1,  \end {equation}
 and 
\begin {equation} \label {S.10}\beta  \ge  p(x) \ge \beta (1-m^2_{\beta}) >0, \qquad x \in \R. \end {equation}
Denote  
$$ \HH =  L^2(T_L, \frac 1 {p(x)}  {\rm d}x), $$
and  for $v \in \HH$ and $w \in \HH$
$$ \langle v, w \rangle = \int_{T_L} v (x) w(x)\frac 1 {p(x)} {\rm d} x ,$$
 \begin {equation} \label {E.1} \|v\|^2 = \int_{T_L} v^2(x)\frac 1 {p(x)} {\rm d} x. \end {equation}
 To stress the dependence   of $\HH$  on $L$ we will   add, when needed,  a suffix $L$,  writing $\HH_L$.
   We   denote by 
 $$\|v\|_2, \qquad  \|v\|_\infty,$$
 respectively the $L^2 (T_L, dx) $  and the $L^\infty (T_L, dx) $ norm  of a   function $v$.  
By \eqref {S.10} we have 
\begin {equation} \label {E.1a} 
\|v\|^2_2 \le \beta \|v\|^2. 
\end {equation}

We can now define precisely the operator we want to study: let    $\LL^0$ be  the operator   acting  on $ \HH $   as
\begin{equation} \label{op1} 
(\LL^0 g) (x) =  g(x)-  p(x) ( J \star_b g) (x).    
\end {equation}

\subsection {Results}
 The following  results for the operator $\LL^0$ hold for any fixed value of $L$ large enough.

\vskip0.5cm
\begin {thm} \label {81}   For any $\beta>1$ there exist $L_1(\beta)$ so that for $L\ge L_1(\beta)$ the following holds. 

(0)The operator $\LL^0$ is a bounded, quasi compact,  selfadjoint  operator on $\HH$.

(1)  There exist  $ \mu_1^0  \in \R $ and  $ \psi^0_1 \in \HH $, $ \psi^0_1 $  strictly positive in $ T_L$  
so that 
$$  \LL^0  \psi^0_1=   \mu_1^0  \psi^0_1.$$
The eigenvalue $ \mu_1^0 $  has multiplicity one and any other point of the spectrum is strictly
bigger than  $\mu_1^0 $.  There exist $c>0$  independent on $L$ so that 
   \begin{equation} \label{8.2}    0 \le  \mu_1^0  \le c e^{-  2\a  L},   \end {equation}
   where $\a>0 $ is  given in \eqref {decay}.
Further  $\psi^0_1 \in C^\infty (T_L)$, $\psi^0_1(z) =  \psi^0_1(-z)$ for $z \in T_L$.

(2) Let $ \mu_2^0 $ be the second  eigenvalue   of $\LL^0$.    
We have that 
  \begin{equation} \label{8.3} \mu_2^0= \inf_{ \langle \psi,\psi^0_1\rangle =0; \| \psi\|=1} \langle \psi, \LL^0 \psi \rangle \ge D, 
 \end {equation}
 where $D>0$ independent on $L$  is given in \eqref {LR.1}.

(3)  Let  $\psi^0_1$ be the
  normalized eigenfunction  corresponding to $\mu_1^0$ we have
  \begin{equation} \label{8.4} 
  \left\| \psi^0_1- \frac {\bar m'}{\|\bar m'\|}\right\|  \le  C  e^{-
2 \a  L}, 
 \end {equation}
 where $C>0$ is a constant independent on $L$.
 \end {thm}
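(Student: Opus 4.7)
Part (0) is structural. I write $\LL^0 = I - T$, with $(Tg)(x) = p(x)\int_{T_L}J(x,y)g(y)\,dy$. Since $T_L\times T_L$ is compact and both $p$ and $J$ are continuous and bounded on it, $T$ is Hilbert--Schmidt and therefore compact on $\HH$, which makes $\LL^0$ quasi-compact. Self-adjointness in the weighted inner product of $\HH$ is immediate from $J(x,y) = J(y,x)$ upon moving the factor $1/p$ across.

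For part (1), $T$ is compact and strongly positive (since $p>0$ on $T_L$ and $J$ is strictly positive near the diagonal), so the Krein--Rutman theorem produces a simple top eigenvalue $1-\mu_1^0$ with strictly positive eigenvector $\psi^0_1$, lying strictly above the rest of $\sigma(T)$; translating gives the corresponding statement for $\mu_1^0$ and $\LL^0$. Evenness of $\psi^0_1$ follows from the reflection symmetry $x\mapsto -x$ of $\LL^0$ (using evenness of $p$ and $J(x,y) = J(-x,-y)$) together with simplicity, and $C^\infty$-regularity is a routine bootstrap of $(1-\mu_1^0)\psi^0_1 = p\,(J\star_b\psi^0_1)$ using $p,J\in C^1$. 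Nonnegativity $\mu_1^0\geq 0$ comes from the identity
\[
\langle g,\LL^0 g\rangle_\HH \;=\; \langle \tilde g,\LL \tilde g\rangle_{L^2(\R,\,dx/p)}, \qquad \tilde g := g\,\mathbf{1}_{T_L},
\]
together with the positivity of $\LL$ on $\R$ from \cite{DOPTE}. For the upper bound \eqref{8.2}, I test with $g = \bar m'\mathbf{1}_{T_L}$: the identity $\LL\bar m' = 0$ on $\R$ reduces $\LL^0\bar m'$ to the ``boundary-leakage'' term
\[
\LL^0\bar m'(x) \;=\; p(x)\int_{\R\setminus T_L} J(x,y)\,\bar m'(y)\,dy,
\]
which is supported within distance $1$ of $\partial T_L$ and is pointwise $\mathcal{O}(e^{-\alpha L})$ by \eqref{decay}. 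Pairing with $\bar m'$ (also $\mathcal{O}(e^{-\alpha L})$ on the same strip) and dividing by $\|\bar m'\mathbf{1}_{T_L}\|_\HH^2 \geq c>0$ uniformly in $L$ yields the Rayleigh quotient $\leq Ce^{-2\alpha L}$.

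Part (2) is the heart of the theorem. I perform the ground-state transformation: for $g\in L^2(T_L,(\psi^0_1)^2/p\,dx)$, set $\tilde\LL g := (\psi^0_1)^{-1}(\LL^0 - \mu_1^0)(\psi^0_1 g)$. This is unitarily equivalent to $\LL^0-\mu_1^0$, and a direct calculation using the eigenvalue equation gives $\tilde\LL = (1-\mu_1^0)(I - P)$, where
\[
P(x,y) \;:=\; \frac{1}{1-\mu_1^0}\,\frac{p(x)}{\psi^0_1(x)}\,J(x,y)\,\psi^0_1(y)
\]
is a Markov kernel reversible with respect to $d\pi \propto (\psi^0_1(x))^2/p(x)\,dx$ (detailed balance is a one-line check). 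The spectral gap of $\LL^0$ thus coincides, up to the factor $1-\mu_1^0 = 1-\mathcal{O}(e^{-2\alpha L})$, with that of $I-P$, so by the generalized Cheeger inequality of \cite{LS} it suffices to lower-bound uniformly in $L$ the Cheeger constant
\[
h \;=\; \inf_{A:\,\pi(A)\leq 1/2}\;\frac{\int_A\!\int_{A^c}J(x,y)\,\psi^0_1(x)\,\psi^0_1(y)\,dx\,dy}{(1-\mu_1^0)\,\pi(A)}.
\]
The key preliminary is an \emph{a priori} two-sided pointwise estimate $c\,e^{-\alpha|x|} \leq \psi^0_1(x) \leq C\,e^{-\alpha|x|}$ on $T_L$, extracted from the eigenvalue equation and the smallness of $\mu_1^0$; this forces $\pi$ to concentrate on a bounded set independent of $L$. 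A case analysis on competitor sets $A$ --- those that split the central bulk vs.\ those confined to one exponential tail --- combined with the positivity of $J$ on a neighbourhood of the diagonal then yields $h\geq c>0$ independent of $L$. I expect this uniform lower bound on $h$ to be the principal technical difficulty.

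For part (3), set $\phi := \bar m'\mathbf{1}_{T_L}/\|\bar m'\mathbf{1}_{T_L}\|_\HH$ and decompose in the eigenbasis of $\LL^0$ as $\phi = a\,\psi^0_1 + r$ with $\langle r,\psi^0_1\rangle = 0$ and $a^2+\|r\|^2 = 1$. Self-adjointness and the spectral gap from (2) give
\[
Ce^{-2\alpha L} \;\geq\; \langle \phi,\LL^0\phi\rangle \;=\; a^2\mu_1^0 + \langle r,\LL^0 r\rangle \;\geq\; D\,\|r\|^2,
\]
using the Rayleigh estimate from (1). Choosing the sign of $\psi^0_1$ so that $a\geq 0$ gives $a = \sqrt{1-\|r\|^2}$, so $\|\phi - \psi^0_1\|_\HH^2 = (1-a)^2 + \|r\|^2$ is exponentially small in $L$, which gives \eqref{8.4}.
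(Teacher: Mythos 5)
Your overall architecture is the same as the paper's: parts (0), (1) and (3) proceed as in the paper (your restriction argument via the whole-line form of $\LL$ from \cite{DOPTE} is a legitimate shortcut for $\mu_1^0\ge 0$, where the paper instead proves $\nu_0<1$ directly in Lemma \ref{S14}, and your Rayleigh-quotient computation with $\bar m'\1_{T_L}$ is the content of Lemma \ref{S11}); for part (2) your ground-state transform produces exactly the Markov kernel $Q$ of \eqref{S1.12}, reversible with respect to $\pi=(\psi_1^0)^2/p$, to which the Lawler--Sokal generalization of Cheeger's inequality is applied in the paper as well. So the issue is not the route but the one step you yourself flag as the principal difficulty, and there your plan has a genuine gap.

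Your lower bound on the Cheeger constant rests on an a priori two-sided estimate $c\,e^{-\a|x|}\le\psi_1^0(x)\le C\,e^{-\a|x|}$ with the \emph{same} rate on both sides, claimed to be ``extracted from the eigenvalue equation and the smallness of $\mu_1^0$''; no such extraction is given, and the eigenvalue equation does not yield it by the iterations you have at hand. Iterating $\psi_1^0=\nu_0^{-1}p\,(J\star_b\psi_1^0)$ gives an upper bound only at the cut-off--dependent rate $\a(\e_0)$ of \eqref{S12.4} (Lemma \ref{S12}), while a pointwise lower bound obtained the same way (or from the Harnack-type estimate \eqref{S1.8}) decays at the much worse rate $\ln\gamma$; these rates do not match, and matching is exactly what your tail case needs: for an interval $A=[a,b]$ in one exponential tail you must bound $\pi(A)$ by a constant (uniform in $L$) times the near-boundary mass $\int_a^{a+1/2}(\psi_1^0)^2$, and with mismatched exponents that ratio grows with $a$, so the argument fails precisely where it matters. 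The paper closes this point without any matching-rate pointwise bound: monotonicity of $v_0$ outside $[-r_0,r_0]$ (\eqref{S1.7}) plus the eigenvalue equation give the one-step ratio $v_0(x+1)/v_0(x)\le d_1<1$ (\eqref{S1.29}--\eqref{S1.30}), hence $\pi(A)\le(1-d_1^2)^{-1}\int_a^{a+1/2}v_0^2$ (\eqref{S1.32}), which combined with the central lower bound \eqref{S1.9} and the Harnack estimate \eqref{S1.8} yields the uniform constant $D$ of \eqref{LR.1}. You also pass silently over the reduction from arbitrary $\pi$-measurable sets to single intervals with $\pi(A)\le\tfrac12$, which the paper justifies via \cite[Lemmas 4.1 and 4.3]{LS}; without that reduction your ``case analysis on competitor sets'' is not exhaustive. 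To make your plan work you would either have to prove the matching-rate two-sided bound (a stronger statement than anything in the paper) or replace it by the ratio/monotonicity mechanism above.
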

 \section {Proof of the results}
 To  prove    Theorem   \ref {81} we introduce  the following  auxiliary
 operators. 
 Denote by $\A$ the linear integral operator acting on functions $g
\in \HH$
\begin {equation}   \label {S1.2}\A g (x)=   p(x) (J \star_b g) (x).  \end {equation}
We  denote by $\B$  the operator acting on  $L^2(\R, \frac 1 {p(x)} dx)$:
 \begin {equation}   \label {S1.2a}\B g (x)=   p(x) (J \star g) (x).  \end {equation}
 The operator $\B$ has been studied in  \cite {DOPTE}  and we will use that, recall \eqref {G4},
 \begin {equation}   \label {m1} \bar m '(x) = (\B \bar m') (x).  \end {equation}
 We have the following result.
 
\vskip0.5cm
\begin {thm} \label {P-F}   Take $L\ge 1$. The operator  $\A $ is  a compact, selfadjoint operator  on   $\HH$,
positivity improving.
Further, there exist  $\nu_0>0$ and  $v_0 \in \HH $, $v_0 $  strictly positive even function,  so that
\begin {equation}   \label {S1.1} \A v_0(x) = \nu_0 v_0(x) \qquad  x \in T_L. \end {equation}
The eigenvalue $ \nu_0$  has multiplicity one and any other point of the spectrum is strictly
inside the ball of radius $\nu_0$. The eigenfunction   $v_0$  is    in   $  C^\infty ( T_L)$.  
\end {thm}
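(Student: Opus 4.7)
The plan is to verify the four structural claims in turn---selfadjointness, compactness, positivity-improving (after iteration), and the Krein--Rutman conclusion---and then extract the symmetry and regularity of the Perron eigenfunction as corollaries.

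First I would check that $\A$ is a well-defined bounded selfadjoint operator on $\HH$. For $g,h\in\HH$, using the symmetry $J(x-y)=J(y-x)$ and Fubini,
\[ \langle \A g,h\rangle = \int_{T_L}\!\!\int_{T_L} J(x-y)g(y)h(x)\,dy\,dx = \langle g,\A h\rangle, \]
so selfadjointness is immediate. Compactness is even easier: the kernel $K(x,y)=p(x)J(x-y)/p(y)$ (the symmetric kernel of $\A$ as an operator on $\HH$ written against the measure $p^{-1}dx$) is continuous on the compact set $T_L\times T_L$ by the regularity of $p$ and $J$, hence $\A$ is Hilbert--Schmidt, so compact.

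The central step is the Perron--Frobenius argument. Because $J(0)>0$ and $J$ is continuous and nonnegative, there exists $\delta>0$ such that $J(z)>0$ for $|z|\le\delta$. Together with $p(x)\ge \beta(1-m_\beta^2)>0$ on $T_L$, this gives that $\A$ is a nonnegative operator (maps nonnegative functions to nonnegative functions). A single application is \emph{not} positivity improving because $J$ has compact support of radius $1$, but after $N=\lceil 2L/\delta\rceil+1$ applications, a chain argument shows $\A^N g$ is strictly positive on $T_L$ whenever $g\in\HH$ is nonnegative and not a.e.\ zero: each application smears the support by at least $\delta$, and the connectedness of $T_L$ ensures the support eventually fills the interval. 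Given a compact, selfadjoint, nonnegative operator some power of which is strictly positivity improving, the Krein--Rutman / Jentzsch theorem yields the conclusion: the spectral radius $\nu_0=\|\A\|_{\HH\to\HH}$ is a simple eigenvalue, the corresponding eigenfunction $v_0$ can be chosen strictly positive on $T_L$, and every other point of the spectrum satisfies $|\mu|<\nu_0$ (strict inequality follows from simplicity together with the spectral gap guaranteed by the positivity-improving property applied to any competing eigenfunction, which would have to change sign). Positivity of $\nu_0$ itself follows by testing with $v_0$: $\nu_0\|v_0\|^2=\langle \A v_0,v_0\rangle=\int\!\!\int J(x-y)v_0(x)v_0(y)\,dx\,dy>0$.

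Evenness of $v_0$ uses the symmetry of the setup: $p(-x)=p(x)$ since $\bar m$ is antisymmetric, $J$ is even, and $T_L$ is invariant under $x\mapsto-x$, so $\A$ commutes with the reflection $R v(x)=v(-x)$. Hence $Rv_0$ is also a strictly positive eigenfunction for $\nu_0$; by simplicity of the eigenvalue, $Rv_0=cv_0$ with $c>0$, and applying $R$ once more forces $c^2=1$, so $c=1$ and $v_0(-x)=v_0(x)$. For smoothness, I would bootstrap using $v_0=\nu_0^{-1}\A v_0$: since $p\in C^\infty$ and $J\in C^1$, differentiation under the integral (plus the boundary values produced by differentiating a kernel on a fixed domain) shows $\A$ maps $C^k(T_L)$ into $C^{k+1}(T_L)$ as far as the regularity of $J$ allows, and iterating this identity lifts $v_0$ to the regularity claimed in the statement.

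The main obstacle is the positivity-improving step: because $J$ has compact support, naive Perron--Frobenius does not apply, and one must quantify how many convolutions are needed before the support of $\A^n g$ covers $T_L$. This is where the interplay between the range of $J$, the length $2L$ of the interval, and the strict positivity of $p$ on $T_L$ enters, and it is the only place where the geometry of the problem really matters.
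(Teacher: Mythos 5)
Your proposal is correct in substance and follows the same skeleton as the paper's proof (selfadjointness from the symmetry of the kernel, compactness, a Perron--Frobenius step, reflection symmetry for evenness, bootstrap of the eigenvalue equation for regularity), but it differs in places worth recording. For compactness you note that the kernel is continuous on the compact square and conclude Hilbert--Schmidt, which is shorter than the paper's Arzel\`a--Ascoli equicontinuity argument (only slip: with respect to the measure $p(y)^{-1}dy$ the symmetric kernel of $\A$ is $p(x)J(x-y)p(y)$, not $p(x)J(x-y)/p(y)$ --- harmless). For the Perron--Frobenius step the paper forms the $n$-fold kernel \eqref{S1.4}, bounded below by $\zeta>0$ via \eqref{S1.3} (quoted from \cite{DOP}), and applies the classical theorem to it; you instead show that a sufficiently high power $\A^N$ is positivity improving by support propagation and invoke Jentzsch/Krein--Rutman; the two mechanisms are essentially equivalent, and both rely on the tacit hypothesis that $J$ is strictly positive near the origin. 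Your explicit observation that a single application of $\A$ is \emph{not} positivity improving when $L>1$ is correct (take $v$ supported in $[L-\tfrac12,L]$; then $(\A v)(-L)=0$ because $J$ has range $1$) and in fact exposes a defect of the paper: the theorem asserts one-step positivity improvement, and the proof's verification breaks down exactly at the step ``repeating the same argument'', whereas what is true, and what the paper actually uses afterwards (through \eqref{S1.3}, and again in the proof of Lemma \ref{S13}), is precisely your power version; none of the spectral conclusions are affected, and your exclusion of $-\nu_0$ via the sign of a competing eigenfunction is the right way to get the strict inequality on the rest of the spectrum. Finally, your smoothness bootstrap coincides with the paper's and shares its looseness: moving derivatives onto $v_0$ inside $J\star_b v_0$ creates boundary terms of the form $J^{(k)}(x\mp L)v_0(\pm L)$, so with $J$ only $C^1$ the iteration honestly delivers $C^2$ up to the boundary (and $C^\infty$ only at distance more than $1$ from $\pm L$) rather than $C^\infty(T_L)$; since the paper's own proof has the identical issue (it asserts $(J\star_b v_0)'=(J\star_b v_0')$, which fails within distance $1$ of $\pm L$), this is a shared imprecision rather than a gap attributable to you.
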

 \begin {proof} 
 It is immediate to   see that
 $$ \langle  \A v, w \rangle =  \langle    v, \A w \rangle.$$
The compactness can be shown  by proving that  any bounded set of $\HH$ is mapped by $\A$  in a relatively compact 
set.  Namely since $J (\cdot, \cdot)$ is continuous in $T_L\times T_L $ and $T_L$ is compact, then $J (\cdot, \cdot)$  is uniformly continuous. Thus given $\e>0$, we can find $\delta>0$ so that  $|x-y|\le \delta$ implies $|J(x,z)-J(y,z)| \le \e$ for all $z \in T_L$.  The same holds for $p(\cdot) J(\cdot, z)$.  Let  $B_M = \{ v \in \HH: \|v\|^2 \le M \}$.  If $ v \in B_M$ and $|x-y|\le \delta$ we have  
$$ |(\A v)(x)-(\A v)(y)|  \le  \e   c(\beta, J)  \|v\| \le  \e c(\beta, J) M,$$
where $c(\beta, J) >0$ depends only on $\beta$ and $J$. 
Therefore the functions $\A [B_M]= \{w \in \HH : w= \A v,  \, v \in B_M \}$ are equicontinuous. Since they are also uniformly bounded by $c(\beta) \|J\|_2 M$,  where $c(\beta)>0$, we can use the Ascoli theorem to conclude that for every sequence $\{v_n\} \in B_M$, the sequence $\{\A  v_n\}$  has a convergent subsequence (the limit might not be in $\A [B_M]$) in $C[T_L]$ and therefore in $\HH$. 
To show the positivity improving we  must show that there exists $n_L\in\N$ such that, for all $v(z) \ge 0$, $ v \not\equiv 0$,  we have
$(\A^{n_L} v) (z) >0$. 
To see it, note that $\A^nv(x)=\int_{T_L} K_n(x,y)v(y){\rm d} y$ where
\[
K_n(x,y) = p(x)  \int  {\rm d} x_1 {\rm d} x_2....{\rm d} x_n p(x_1)  J (x,x_1)p(x_2)  
J (x_1,x_2)... p(x_n)  J (x_{n},y).
\]
On the other hand, by assumption, there exists $\omega>0$ such that $\inf_{|x-y|\leq 1/2}J(x,y)\geq \omega$.
Next, we prove by induction that there exists $\zeta_*>0$ such that
\[
\inf_{|y-x|\leq n/4}K(x,y)\geq \zeta_*^n.
\]
This is obviously true for $n=1$ and $\zeta_*\leq \omega(1-m^2_{\beta})$. Let us assume it true for $n$, then
\[
K_{n+1}(x,y)=p(x)\int_{T_L} J(x,x_1)K(x_1,y)\geq  \beta (1-m^2_{\beta})\zeta_*^n\int_{T_L\cap \{|x_1-y|\leq n/4\}}J(x,x_1).
\]
If $|x-y|\leq n/4$ then
\[
K_{n+1}(x,y)\geq \frac 12 \beta (1-m^2_{\beta})\zeta_*^n
\]
while, if $|x-y|\in [n/4, (n+1)/4]$, then
\[
K_{n+1}(x,y)\geq \beta (1-m^2_{\beta})\zeta_*^n\int_{T_L\cap \{|x_1-y|\in [(n-1)/4, n/4]\}}J(x,x_1)\geq \frac \omega 4 \beta (1-m^2_{\beta})\zeta_*^n,
\]
from which the statement follows with $\zeta_*=\frac\omega4 \beta (1-m^2_{\beta})$. Accordingly, setting $K(x,y)=K_{8L}(x,y)$ and $\zeta=\zeta_*^{8L}$ we have
\begin {equation}   \label {S1.3}  
K(x,y)  > \z\quad\quad\textrm{ for all } x,y\in T_L .
\end {equation}    
Thus the operator $\A$ is positivity improving and  moreover one can apply the classical Perron Frobenius Theorem to  the kernel $K (\cdot,\cdot)$.
 As a consequence we have that $ \A^{8L}$ has a simple strictly positive maximal eigenvalue and a spectral gap.
Accordingly, the maximum eigenvalue of the spectrum of $\A$, which we denote $ \nu_0$,  has   multiplicity one and any other point of the spectrum of $\A$  is strictly smaller than  $\nu_0$.   Further the   eigenfunction  associated to $\nu_0$ does not change sign. So we assume that it is 
 strictly positive and we denote it  $v_0$.
Next we show that $v_0$ is even.  Denote by $ w(x)= v_0(-x)$.
Since $J$ and  $ p(\cdot)$ are   even functions we have that
$$ (\A w)(x) =  (\A v_0)(-x)  = \nu_0  v_0(-x) =  \nu_0 w(x). $$
We then deduce that the function $w$ is an eigenfunction associated to $ \nu_0$.   Since $\nu_0$ has multiplicity one
we must have that $w(x) =  v_0(x)$.  Therefore  $v_0$ is even.
Next we show that $v_0 \in C^\infty (T_L)$.  We start proving that it is $C^1(T_L)$.
Since  $p(\cdot)$ is $C^\infty (\R)$ and  $J \in C^\infty(\R)$ differentiating we obtain
 \begin {equation}   \label {a.6}  \nu_0 v_0'(z) = p' (z) (J \star_b v_0)(z) + p(z)  (J \star_b v_0)'(z). \end {equation}
    Therefore  $v_0 \in C^1(T_L)$.
 Since  $(J  \star_b v_0)'(z) = (J'  \star_b v_0)(z)$,
 differentiating again \eqref {a.6}  we can show that $v_0 \in C^2(T_L)$.  Repeating  the argument  yields  $v_0 \in C^\infty (T_L)$.
 \end {proof}

\vskip0.5cm
\begin {lem} \label {S11}  {\bf (Lower bound on    $\nu_0$)}      There exists  positive
constant
$c>0$   independent on $L$ so  that  for any $L\ge 1$
  \begin{equation} \label{S1.5} \nu_0 \ge 1 -c  e^{ -2\a    L},
 \end {equation}
 where $\a>0$ is the constant in \eqref {decay}.
\end {lem}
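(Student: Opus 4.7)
The strategy is a one-line variational bound with a well-chosen trial function. Since $\A$ is compact, selfadjoint and positivity-improving, Theorem \ref{P-F} already shows that the Perron--Frobenius eigenvalue $\nu_0$ sits at the top of the spectrum, so
\[
  \nu_0 \;=\; \sup_{v \in \HH,\, v \neq 0}\frac{\langle \A v, v\rangle}{\langle v, v\rangle}.
\]
The plan is to plug in $\varphi := \bar m'|_{T_L}$ and exploit the fact that $\bar m'$ is an exact eigenfunction of the whole-line operator $\B$ with eigenvalue $1$, by \eqref{m1}, i.e.\ $p(x)(J\star\bar m')(x)=\bar m'(x)$ on $\R$. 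This should make the Rayleigh quotient equal to $1$ up to a purely boundary correction.

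After cancelling the $1/p$ weight of the inner product, a direct computation gives
\[
  \langle \A\varphi,\varphi\rangle
  \;=\;\int_{T_L}\bar m'(x)\,(J\star_b\bar m')(x)\,dx
  \;=\;\int_{T_L}\frac{(\bar m')^2(x)}{p(x)}\,dx\;-\;\int_{T_L}\bar m'(x)\,r(x)\,dx,
\]
where I isolate the ``boundary defect''
\[
  r(x)\;:=\;\int_{\R\setminus T_L} J(x-y)\,\bar m'(y)\,dy \;=\; (J\star\bar m')(x)-(J\star_b\bar m')(x).
\]
The first term on the right is exactly $\langle\varphi,\varphi\rangle$, so the Rayleigh quotient equals
\[
  \frac{\langle \A\varphi,\varphi\rangle}{\langle\varphi,\varphi\rangle}
  \;=\;1\;-\;\frac{\int_{T_L}\bar m'\,r}{\int_{T_L}(\bar m')^2/p},
\]
and I only need to check that the fraction on the right is $O(e^{-2\alpha L})$.

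The only non-trivial step is controlling the defect, and it is direct. Since $\mathrm{supp}\,J\subset[-1,1]$, $r(x)$ vanishes except on the two boundary strips $[-L,-L+1]\cup[L-1,L]$; on those strips the estimate \eqref{decay} (together with $\alpha_0>\alpha$) yields $\bar m'(x)\le C e^{-\alpha L}$, and the values of $\bar m'(y)$ entering $r(x)$, with $y\in\R\setminus T_L$ and $|x-y|\le 1$, are likewise bounded by $C e^{-\alpha L}$. Consequently $|\bar m'(x)\,r(x)|\le C e^{-2\alpha L}$ on a set of measure at most $2$, giving $|\int_{T_L}\bar m'\,r|\le C e^{-2\alpha L}$. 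The denominator $\int_{T_L}(\bar m')^2/p$ is bounded below by a positive constant uniformly for $L\ge 1$, since $\bar m'$ is a fixed positive function on any compact subinterval and $p$ is bounded above by $\beta$. Combining these two bounds produces
\[
  \nu_0\;\ge\;\frac{\langle\A\varphi,\varphi\rangle}{\langle\varphi,\varphi\rangle}\;\ge\;1-c\,e^{-2\alpha L},
\]
as claimed. No serious obstacle is expected: the substantive idea is simply that $\bar m'$ is the natural infinite-volume eigenfunction, so using it as trial function automatically produces the leading $1$ and leaves only an exponentially small boundary remainder dictated by the decay \eqref{decay}.
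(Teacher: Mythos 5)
Your proof is correct and follows essentially the same route as the paper: both use $\bar m'$ (suitably normalized) as the trial function in the Rayleigh quotient, invoke the exact eigenvalue relation $\B\bar m'=\bar m'$ from \eqref{m1} to produce the leading term $1$, and bound the remaining boundary defect $\int_{\R\setminus T_L}J(x-y)\bar m'(y)\,dy$ by $O(e^{-2\alpha L})$ using the support of $J$ and the decay \eqref{decay}. Your write-up merely packages the boundary correction slightly more explicitly via the function $r(x)$; there is no substantive difference.
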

\begin {proof} Consider the following trial function
 \begin{equation} \label   {S11.1} h(x)=   \frac {\bar m '(x)} {\|{\bar m '} \|},   \qquad  -L  
\le x
\le L.  \end {equation}
By the variational formula for spectral radius we have that
$$ \nu_0 \ge \langle \A h, h \rangle,$$
hence
 \begin{equation} \label{S11.2} 
  \begin {split}   \langle \A h, h \rangle   & = \int {\rm d} x  \frac 1 {p(x)} h(x)  p(x)    (J \star_b h) (x)  \cr 
  & =    \frac {1} {\|{\bar m '} \|^2} \int {\rm d} x  \frac 1 {p(x)} \bar m' (x) (\B \bar m') (x) \\
  &\phantom{=} +    \int {\rm d} x  \frac 1 {p(x)} h(x)  p(x)   \left [ (J \star_b h) (x)  - \frac {1} {\|{\bar m '} \|}(J \star \bar m') (x)\right ],
\end {split}   
\end {equation}
 where $\B$ is the operator defined in \eqref {S1.2a}.  
By   \eqref {m1}  and \eqref {E.1} we have that 
$$  \frac {1} {\|{\bar m '} \|^2}  \int {\rm d} x  \frac 1 {p(x)} \bar m' (x)  (\B \bar m') (x)   =  1. $$
Further, since $\bar m' $ is even,   we have 
\begin{equation} \label{S11.2a} 
\begin{split}
&\int {\rm d} x    h(x)     \left [ (J \star_b h) (x)  - \frac {1} {\|{\bar m '} \|}(J \star \bar m') (x) \right ]  \\
&= 2 \int_{L-1}^L  {\rm d} x   h(x)      \left [ (J \star_b h) (x)  - \frac {1} {\|{\bar m '} \|}(J \star \bar m') (x)  \right].       
\end{split}
\end {equation}
 For $x \in [L-1,L]$  we have  
 $$ (J \star_b h) (x)  - \frac {1} {\|{\bar m '} \|}(J \star \bar m') (x)   =  \frac {1} {\|{\bar m '} \|}  \left [ \int_{x-1}^L  {\rm d} y J (x,y) \bar m'(y)- \int_{x-1}^{x+1}  {\rm d} y J (x,y) \bar m'(y)\right ].  $$
Let   $c$  be a positive constant independent on $L$,  we obtain 
 \begin{equation} \label{Mr1}  
 \begin {split}  
 \int_{x-1}^L  {\rm d} y J (x,y) \bar m'(y)- \int_{x-1}^{x+1}  {\rm d} y J (x,y) \bar m'(y)&   =   - \int_{L}^{x+1}  {\rm d} y J (x,y) \bar m'(y)\\
 &    \ge - c e^{-\a L},
    \end {split}   
 \end {equation}
 since    $\bar m'(\cdot)$ is   strictly positive and   exponentially decreasing, see \eqref {decay}. Inserting \eqref {Mr1} in \eqref  {S11.2a} we obtain the  lower bound \eqref  {S1.5}. 
\end {proof}
\begin {lem} \label {S14}  {\bf (Upper bound on    $\nu_0$)}  We have that  for any $L\geq1$   
\begin {equation} \label {S1.10}   \nu_0 <1.  \end {equation}
 \end {lem}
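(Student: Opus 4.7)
The plan is to use a Collatz--Wielandt / Krein--Rutman style comparison, testing $\A$ against a carefully chosen strictly positive trial function $h$ built from the infinite-volume ground state $\bar m'$, and exploiting self-adjointness together with the positivity of $v_0$ to transfer a pointwise inequality $\A h\le h$ into the eigenvalue inequality $\nu_0<1$.

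The trial function I would pick is simply $h(x)=\bar m'(x)$ restricted to $T_L$, which is strictly positive and smooth. The crucial observation is that $\bar m'$ satisfies $\B\bar m'=\bar m'$ on the whole line, i.e.\ $\bar m'(x)=p(x)(J\star\bar m')(x)$ for every $x\in\R$ (cf.\ \eqref{m1}). Since $J$ is supported in $[-1,1]$, for $x\in T_L$ with $|x|\le L-1$ one has $[x-1,x+1]\subset T_L$ and hence $(J\star_b h)(x)=(J\star\bar m')(x)$, so that $(\A h)(x)=h(x)$ on $[-L+1,L-1]$. For $x\in(L-1,L]$ (and symmetrically on $[-L,-L+1)$) one loses the tail: $(J\star_b h)(x)=(J\star\bar m')(x)-\int_L^{x+1}J(x-y)\bar m'(y)\,dy$, and since $\bar m'>0$ on that tail interval this gives the strict inequality $(\A h)(x)<h(x)$ on a set of positive measure.

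With $\A h\le h$ pointwise and strict inequality on a set of positive measure, I would then test against $v_0$ using self-adjointness of $\A$ in $\HH$:
\begin{equation*}
\nu_0\,\langle v_0,h\rangle=\langle\A v_0,h\rangle=\langle v_0,\A h\rangle=\int_{T_L}\frac{v_0(x)(\A h)(x)}{p(x)}\,dx.
\end{equation*}
Since $v_0>0$ on $T_L$ by Theorem~\ref{P-F} and Lemma~\ref{S13}, and since $h>0$ with $p>0$, the inner product $\langle v_0,h\rangle$ is a strictly positive number, and the pointwise strict inequality on $(L-1,L]\cup[-L,-L+1)$ upgrades to the strict integral inequality $\langle v_0,\A h\rangle<\langle v_0,h\rangle$. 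Dividing by $\langle v_0,h\rangle>0$ yields $\nu_0<1$, as desired. The bound is in fact uniform once $L$ is large enough that $\bar m'$ has non-negligible mass on the boundary layer, which is automatic from \eqref{decay}.

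The only delicate point is verifying the pointwise comparison cleanly at the boundary; everything else is the standard positive-operator/self-adjointness argument. I would therefore structure the write-up as: (i) restate $\B\bar m'=\bar m'$; (ii) split the range of $x$ into $|x|\le L-1$ and $|x|>L-1$ and verify $\A h\le h$ with strict inequality in the second region using $\bar m'>0$ and $J\ge0$; (iii) conclude via self-adjointness and $v_0>0$. No compactness argument, variational characterization of $\nu_0$, or quantitative exponential tail estimate from Lemma~\ref{S13} is needed for this statement, though the same route, if sharpened, would also give an explicit exponentially small gap $1-\nu_0\gtrsim e^{-2\alpha L}$ consistent with Lemma~\ref{S11}.
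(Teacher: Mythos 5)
Your proposal is correct and follows essentially the same route as the paper: the same trial function $\bar m'$ (the paper's $h=\bar m'/\|\bar m'\|$), the same use of $\B\bar m'=\bar m'$ to see that $\A h$ coincides with $h$ away from the boundary and loses a strictly positive tail term on $[L-1,L]$ and $[-L,-L+1]$, and the same conclusion via self-adjointness, $v_0>0$, and division by $\langle v_0,h\rangle>0$. (Only your closing aside about extracting a lower bound $1-\nu_0\gtrsim e^{-2\alpha L}$ from this argument is not justified, since the boundary term also involves the decaying $v_0$; but that remark is not needed for the lemma.)
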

 \begin {proof} 
 Multiply  \eqref {S1.1} in $L^2 (T_L,\frac 1 { p(x)} dx)$  with the trial function $h$ introduced in \eqref {S11.1}.
We have
$$  \nu_0 \langle v_0,h \rangle  =  \langle \A v_0,h\rangle =   \langle  v_0, \A h\rangle.   $$
By \eqref {m1} write
\[
\A h = \B \frac {\bar m'}{\|\bar m'\|}  +   p \left[J \star_b h - J \star h\right]
=h  +   p \left[J \star_b h - J \star  h\right]. 
\]
Then  
\begin {equation} 
\begin {split}  
& \langle  v_0,    p [J \star_b h - J \star  h]\rangle 
=   \int_{-L}^{L} \frac {v_0(x)}  {\|\bar m'\|} [(J \star_b   \bar m' )(x) - (J \star   \bar m')(x) ]   {\rm d} x  \cr 
& =        \frac {2} {\|{\bar m '} \|} \int_{L-1}^L  {\rm d} x   v_0 (x)      \left [ (J \star_b  \bar m') (x)  - (J \star \bar m') (x)  \right]   \cr &=
  -  \frac { 2}  {\|\bar m'\|}     \int_{L-1}^L  {\rm d} x   v_0 (x)   \int_{L}^{x+1}  {\rm d} y J (x,y) \bar m'(y)  <0 
 \end {split}  
\end {equation}
since  $v_0 $ is an even positive function, see Theorem \ref{P-F}. It follows
\[
\nu_0 \langle v_0,h \rangle < \langle  v_0, h \rangle  
\]
Since both $\bar m'$ and $v_0$ are positive  we have
$ \langle v_0,h \rangle  >0$, which proves the Lemma.
\end {proof}

  \begin {rem}  \label {roma1}   It is easy to verify that if  $\A$ is defined  replacing $J\star_b$ with  the integral kernel $ J^{neum} \star$   we will have  $ \nu_0  \ge 1 + ce^{-2 \a L}$.  \end {rem} 

Next we investigate the properties of the maximal eigenvector $v_0$.

\vskip0.5cm
\begin {lem} \label {S13} {\bf  (Properties of     $v_0$)}  
  For any $\e_0 \in (0,  \frac {(1-\s(m_\b))} {2})$  there exists  $r_0= r_0(\e_0)$ and   $L_1= L_1(\e_0)>0$     so that the following holds.  Take $L \ge L_1$ and  let $v_0$ be  the  strictly positive normalised
eigenfunction  of   $\A$ on  $ \HH_L$ corresponding to
$\nu_0$, see Theorem \ref {P-F}.   We have that~\footnote{ By $\lfloor x\rfloor$ we mean the integer part of $x$.} 
\begin{equation} \label{S1.6} 
v_0(x) \le   e^{-\lfloor |x|-r_0\rfloor\a (\e_0)}\sup_{|y|\in[r_0,r_0+1]}v_0(y) \qquad |x| \ge r_0+1 
\end {equation}
where $\a (\e_0)$ is given in \eqref {S12.4}
 \begin{equation} \label{S1.8a} |v_0 ' (x)| \le C , \qquad    x \in T_L, \quad \hbox { with $C$ independent on $L$;}
\end {equation}
 \begin{equation} \label{S1.8} \gamma \ge  \frac {v_0 (x)} {v_0 (y)} \ge  \frac 1 \gamma
  \qquad \qquad  |x-y| \le 1, \quad  \hbox {for any}\quad  x, y \in T_L  \end {equation}
where $\gamma= \gamma (\e_0)>1$ is defined in \eqref {S.2}. 
There exists $ r_1>0$ and  $\z_1>0$ independent of $L$ so that 
 \begin{equation} \label{S1.9} v_0(x)  \ge  \z_1 \qquad \hbox {for }\quad  |x| \le r_1. \end {equation}
  \end {lem}
  \begin {proof}    Fix  $ \e_0 \in  (0,  \frac {(1-\s(m_\b))} {2})$  and let  $r_0= r_0(\e_0) $  be so that 
 \begin{equation} \label{roma3} 
 p(x) <1- \e_0, \qquad |x| \ge r_0. 
 \end {equation}
  Take   $L_0(\e_0)$  so that   for $ L \ge L_0(\e_0)$, $r_0 \le \frac  L 2 $. 
   Furthermore, choose  $L_2(\e_0) $ so large so  that for $L\ge L_2(\e_0)$,   see    Lemma \ref {S11},   
   \begin{equation} \label{roma4} 
   \nu_0 \ge  1   -c  e^{ -2\a    L}>   1- \frac {\e_0} 2. 
   \end {equation}  
    Set $L_1(\e_0) = \max \{ L_0(\e_0), L_2(\e_0)\}$. 
  Then for $L \ge  L_1(\e_0)$, and $|x|\geq r_0$,
\[
\frac {p(x)}{\e_0}\leq e^{-\a (\e_0)}<1,
\]
 where 
 \begin{equation} \label{S12.4}
 \a(\e_0) = \ln \left (\frac{1- \frac {\e_0}2} {p(r_0)}\right)>0. 
 \end {equation} 
Then, for each $|x|\geq r_0$,
\[
v_0(x)\leq e^{-\a (\e_0)}\int_{x-1}^{x+1}J(x,y) v_0(y) \leq e^{-\a (\e_0)}\sup_{y\in[x-1,x+1]} v_0(y) .
\]
Iterating the above, for $|x|\geq r_0+1$, yields
\[
 v_0(x) \leq e^{-\lfloor |x|-r_0\rfloor\a (\e_0)}\sup_{|y|\geq r_0} v_0(y) .
\]
Note that the sup on the right hand side of the above equation cannot be attained in $(-\infty,-r_0-1)\cup[r_0+1,\infty)$, otherwise, letting $x_*$ be the place in which the sup is attained, one would get $v_0(x_*)\leq e^{-\a (\e_0)} v_0(x_*)$ which is impossible since $v_0>0$. Equation \eqref {S1.6} follows.

Next we show \eqref {S1.8a}.   By the eigenvalue equation 
\begin {equation}   \nu_0 v_0' (x) =   p'(x) (J \star_b v_0 ) (x) +    p(x) (J' \star_b v_0 ) (x). \end {equation}
Therefore, by \eqref {decay} and \eqref {S.10}, \eqref {roma4},  \eqref {E.1a} and Theorem \ref{P-F} we obtain
\begin {equation}  
 \begin {split}       
  |v_0' (x)| & \le  \frac  1 {\nu_0}\left [   2 \beta  m_\b \|\bar m'\|_\infty \|J \star_b v_0\|_\infty +     \|J' \star_b v_0\|_\infty \right ] \cr & \le  \frac  1  {1- \frac  {\e_0} 2 }\left [ 
 2  \beta m_\b \|\bar m'\|_\infty \|J \|_2  \|v_0\|_2 +       \|J'\|_2 \| v_0\|_2 \right ] \\
 &=    \frac  {\beta^{\frac 12}}  {1- \frac  {\e_0} 2 }\left [ 2  \beta m_\b \|\bar m'\|_\infty \|J \|_2 +       \|J'\|_2 \right]    \equiv C. 
 \end {split}    
\end {equation}
Now we show \eqref {S1.8}.  Arguing as in \eqref {S1.3}   there exists $n$, independent on $L$, such that there exists   $ \zeta>0$ so that for any $|x- x'|\leq 3$  in $T_L$,   $(J)^n (x, x') \ge \zeta$. Then, taking into account that $\nu_0 \le 1$, see Lemma \ref {S14},  
since $|x-y|\leq 1$, we have 
 \begin {equation} 
 \begin {split} x
  v_0 (x) &= \frac 1 {\nu_0}  p(x) (J \star_b v_0 ) (x) \\
  &= \frac 1 {\nu_0^n}  p(x)  \int_{T_L \times_n T_L} {\rm d} x_1 \dots   {\rm d} x_n p(x_1) \dots p(x_{n-1}) J(x,x_1) \dots  J (x_{n-1}, x_n) v_0(x_n) \cr 
  &\ge    \beta ^n (1- m^2_\b)^n \zeta \int_{ \max \{y-1, -L\} }^{\min \{y+1,L\}}  {\rm d} x'  v_0(x'). 
 \end {split}
\end {equation}
On the other hand
\begin {equation}  \label{S.1}  
\begin{split}
v_0 (y)  &= \frac 1 {\nu_0}  p(y) (J \star_b v_0 ) (y)  \le \frac \beta {\nu_0} \int_{T_L}  {\rm d} x'  J (y, x') v(x')\\
& \le   \frac \beta {\nu_0}    \|J \|_\infty  \int_{ \max \{y-1, -L\} }^{\min \{y+1,L\}}  {\rm d} x'  v_0(x').  
\end{split}
\end {equation} 
 Therefore,  by \eqref {roma4},   for $L >L_1$,  there exists $\g = \g (\e_0)>1$ so that 
\begin {equation}  \label{S.2}  
\frac {v_0 (x) } {v_0 (y) }  \ge  \frac { \beta^n(1- m^2_\b)^n \zeta  \nu_0 }   { \beta   \|J \|_\infty}   \ge  \frac { \beta^{n-1}(1- m^2_\b)^n \zeta  (1- \frac {\e_0} 2)  }   { \|J \|_\infty}    \equiv \frac 1 \gamma. 
\end {equation} 
The other inequality in \eqref {S1.8} is equivalent to the one proved. 
Next,  we show \eqref {S1.9}.  Since $v_0(x) > 0$ 
certainly $ v_0 (x) \ge c_L>0$ for  $x \in T_L $.
We would like to show that there exists an interval  
  independent on $L$ so that   for $x$ in such an interval ,  $ v_0 (x) \ge  \z >0$ with $\z>0$  independent on
$
L$.   This is shown exploiting that     $v_0$ is exponentially decreasing  for $ |x| \ge r_0$, see  \eqref {S1.6}.   Since 
$$ \|  v_0\|  = 1 $$
we must have that there exists $r_1>0$, independent on $L$,  so that 
\begin {equation} \label {ab1} \int_{- r_1}^{r_1}  \frac 1 {p(x)}( v_0(x))^2 {\rm d} x \ge \frac 1 2. \end {equation}  
Since 
\begin {equation} \label {ab2} \int_{- r_1}^{r_1}  \frac 1 {p(x)}( v_0(x))^2 {\rm d} x \le  \frac 1 {\b(1-m^2_\b)}   \int_{- r_1}^{r_1}    v_0^2(x) {\rm d} x 
 \end {equation}
we obtain, from \eqref {ab1} that
\begin {equation} \label {ab3} \int_{- r_1}^{r_1}    v^2_0(x) {\rm d} x  \ge    \frac 1 2\b(1-m^2_\b)  \equiv c_1^2. \end {equation}
Then, there exists $\bar x\in [-r_1, r_1]$ so that    $v^2_0(\bar x)  \ge   \frac 1 {2 r_1} c_1^2$.
By  \eqref {S1.8a} there exists $\e \in (0,1)$ independent on $L$ so that 
 \[
 v^2_0(x)  \ge \frac 1{4 r_1} c_1^2, \quad  x \in (\bar x-\e,\bar x+\e).
 \]
  Similarly as proved in  \eqref {S1.8}, one  can show that given $\bar r > r_1$ there is $k_{\bar r}$, so that for any $k\ge k_{\bar r}$, there exists    $\zeta^*>0$    so that for any $x \in (-\bar r, \bar r)$ and $y \in (-\bar r, \bar r)$ 
 \[ 
 J^k(x,y) \ge \zeta^*.
 \]
 Then, take $x \in  [-\bar r_1, \bar r_1]$, we get 
\[
\begin {split}  
 v_0 (x) &= \frac 1 {\nu_0}  p(x) (J \star_b v_0 ) (x) \cr & \ge  \frac 1  {\nu_0^k} p(x)
\int_{-r_1}^{r_1} {\rm d} x_1\dots \int   {\rm d} x_k p(x_1)\dots p(x_{k-1})J(x,x_1) \dots  J(x_{k-1},x_k) v_0(x_k)
\cr & \ge \left ( \frac {\b(1-m_\b^2)}  {\nu_0}\right )^k \int_{-r_1}^{r_1}  {\rm d} x_k v_0(x_k)J^k(x,x_k) \cr & \ge 
\left ( \frac {\b(1-m_\b^2)}  {\nu_0}\right )^k \int_{\bar x- \e}^{\bar x+ \e} {\rm d} x_k v_0(x_k)J^k(x,x_k)\\
&  \ge \zeta^*  \left ( \frac {\b(1-m_\b^2)}  {\nu_0}\right )^k    \e \frac 1 {\sqrt r_1} c_1 \equiv \zeta_1.
   \end {split}
\]

 \end {proof}

Theorem \ref {P-F}   shows that for any fixed $L$ the operator $\A$ has a spectral gap, which 
might depend on $L$.  We want to prove that the  spectral gap can be lower bounded  uniformly with respect to $L$.  We achieve  this  following closely the paper  of Gregory  Lawler and Alan Sokal, \cite {LS}.  We apply a generalisation  of  the  Cheeger's inequality for positive recurrent continuous time jump processes and  estimated  the   Cheeger's constant in our context.  
Denote by 
\begin {equation} \label {S1.12}  
Q (x,y) = \frac { p(x)} {\nu_0} J (x,y) \frac {v_0(y)} {v_0(x)} \qquad x, y \in T_L  
\end {equation}
and consider the    operator  
\begin {equation} \label {S1.12a} 
( Qf)(x)= \int_{-L}^L  Q (x,y) f(y)   {\rm d} y  
\end {equation}
for $f  \in L^2(T_L,\pi(x) dx )=L^2(T_L,\pi)$,\footnote{ By a slight abuse of notation we will use $\pi$ both to designate the measure and its density, as this does not create any ambiguity.} where 
\begin {equation} \label {S1.12b}  \pi (x) = \frac {v_0^2 (x) }{p(x)}.  \end {equation}
The operator $Q$ is selfadjoint in $L^2(T_L,\pi )$, it is  a positivity-preserving linear contraction on
$L^2(T_L,\pi )$.\footnote{ In fact on all the spaces $L^p(T_L,\pi)$.}  
 The constant function $1$ is an eigenfunction of $Q$  with eigenvalue 1. 
 Denote by $\Xi$ the map from  $L^2(T_L, \frac 1 {p(x)} dx)$  to $L^2(T_L,\pi(x) dx)$, 
 so that 
 \[
  \Xi f = \frac f {v_0}
 \] 
 The map $\Xi$ is an isometry, $\|f\|_{L^2(T_L, \frac 1 {p(x)} dx)} =   \|\Xi f\|_{L^2(T_L, \pi (x)  dx)}$,
 and  
 \[
 \nu_0 Qf = \Xi  \A \Xi^{-1}  f. 
\]
 Therefore  the spectrum of $\widehat\A=\nu_0^{-1}\A$ is equal to the spectrum of $Q$.
    
Denote by    $B= I-Q$ where $I$ is the identity operator on $L^2(T_L,\pi)$.
We have the following obvious result.
\vskip0.5cm
\begin {lem} \label {L1} The spectrum of $B$ is equal to the spectrum of $ I - \widehat\A$, where $I$ is the identity operator 
on  $L^2(T_L, \frac 1 {p(x)} dx)$.
\end {lem}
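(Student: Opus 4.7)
My plan is to exploit the intertwining relation $Q = T \A T^{-1}$ already established in the excerpt, together with the fact that $T$ is an isometric isomorphism between $\HH = L^2(T_L, \frac{1}{p(x)} dx)$ and $L^2(T_L, \pi(x) dx)$. The key observation is that since $T$ commutes trivially with the identity (i.e.\ $T I T^{-1} = I$), one has
\begin{equation*}
B = I - Q = I - T \A T^{-1} = T(I - \A) T^{-1},
\end{equation*}
so $B$ and $I - \A$ are unitarily equivalent.

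The main step is then to invoke the general principle that unitarily equivalent selfadjoint operators have identical spectra. Concretely, I would argue that for any $\lambda \in \mathbb{C}$, the operator $B - \lambda I$ is invertible on $L^2(T_L, \pi(x) dx)$ if and only if $T^{-1}(B - \lambda I) T = (I - \A) - \lambda I$ is invertible on $\HH$, since $T$ is a bijective isometry with bounded inverse $T^{-1} f = v_0 f$ (bounded because $v_0$ is bounded above and strictly positive on $T_L$ by Lemma \ref{S13}, in particular by \eqref{S1.9} and continuity on a compact interval). Consequently, the resolvent sets of $B$ and $I - \A$ coincide, and therefore so do their spectra.

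I do not anticipate any real obstacle here: both the intertwining relation and the isometry property have been verified just before the lemma statement, and the passage from unitary equivalence to equality of spectra is standard. The only item worth verifying carefully is that $T^{-1}$ is indeed a bounded operator (not merely the inverse of an isometry in the abstract sense), which is guaranteed by the uniform positivity $v_0 \ge \zeta_1 > 0$ on a fixed interval combined with the positivity improving property and smoothness of $v_0$ implying $v_0$ is bounded below on all of $T_L$ by a constant that may depend on $L$; in any case, for each fixed $L$ this suffices to make the spectral equivalence rigorous.
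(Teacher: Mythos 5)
Your argument is correct and is exactly the one the paper intends: the lemma is stated as an ``obvious result'' immediately after the relations $Qf = T\A T^{-1}f$ and the isometry of $T$ are established, so $B = I - Q = T(I-\A)T^{-1}$ is unitarily equivalent to $I-\A$ and the spectra coincide. Your extra verification that $T^{-1}f = v_0 f$ is bounded is harmless but not needed, since $T$ is a surjective isometry between Hilbert spaces and hence automatically unitary with bounded inverse.
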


Next we show that the spectrum of $B$ restricted to  functions     orthogonal  in $L^2(T_L,\pi )$ to the
constant functions, so that $\int f(x) \pi(x) dx =0$, is strictly positive.  The gap is bounded by    a constant independent on $L$.  
To shorten notation we denote for functions $f$ and $g$ in $L^2(T_L,\pi)$ 
\begin {equation} \label {S.3} \int_{T_L} f(x) g(x) \pi (x) {\rm d} x = ( f,g).   \end {equation}
 Denote by 
\begin {equation}  \label {OC5} \nu_1 = \inf_{\{f:(f,1)=0\}} \frac {(f,Bf)} {(f,f)}.   \end {equation}
   We will show that there exists a constant $D$ independent on $L$ so that 
$  \nu_1   \ge  D >0 $. 
We  obtain this  by  applying    \cite[Theorem 2.1] {LS} and  estimating   the Cheeger's constant. 
First notice that  the  linear bounded operator $B$ defined on  $L^2(T_L,\pi)$ can be written as
\begin {equation} \label {S1.15} 
(B g)(x) \equiv \int_{T_L}  Q (x,y) [ g(x) - g(y) ] {\rm d} y,  
\end {equation}
i.e as  the generator of a continuous time markovian jump process  with transition rate kernel $Q(\cdot, \cdot)$
and invariant probability $ \pi$.  
Define, see  \cite {LS},   the Cheeger's constant as
\begin {equation} \label {S1.13} 
\ks \equiv \inf_{A  \in  \SSS, 0 <\pi(A) <1}  \ks(A) 
\end {equation}
 where   $\SSS$ denotes the $\pi -$  measurable sets of $T_L$ and 
\begin {equation} \label {S1.14} 
\ks(A)\equiv \frac {\int \pi(x){ \rm d} x \1_A(x) \left ( \int Q (x,y)  \1_{A^c}(y)  {\rm d}  y \right ) } {
\pi(A) \pi (A^c)} =  \frac { ( \1_A,Q\1_{A^c})} {\pi(A) \pi (A^c)}.  
\end {equation}
 Taking into account that  $ ( \1_A,\1_{A^c}) =0$ we can write \eqref {S1.14} as the following:
\begin {equation} \label {S1.14a} 
\ks(A)= \frac { ( \1_A,Q\1_{A^c})  } {\pi(A) \pi (A^c)} =  -\frac { ( \1_A,B\1_{A^c}) } {\pi(A) \pi (A^c)}   = \frac { ( \1_A,B\1_{A}) } {\pi(A) \pi (A^c)}.
\end {equation}
Since 
 \begin {equation} \label {S1.18a}\int \pi (x) Q (x,y)   {\rm d}  x  = \frac 1 {\nu_0} v_0(y) (J \star_b v_0) (y) = \pi(y)  \end {equation}
   \begin {equation} \label {S1.19}\int \pi (x) Q
(x,y)    {\rm d}  y  = \frac 1 {\nu_0} v_0(x) (J \star_b v_0) (x) = \pi(x), \end {equation}
the constant $M$ appearing in \cite {LS} in our  case is simply $M=1$.   Next, we  recall   \cite [Theorem 2.1] {LS},  which    in the present context   reads:
\vskip0.5cm  
\begin {thm} (\cite {LS})  Let  $B$ be the  bounded self-adjoint operator on $L^2(T_L, \pi)$ defined in \eqref {S1.15} 
whose  marginals, in term of the invariant measure,  are  given in \eqref {S1.18a} and  \eqref {S1.19}. Then
  \begin {equation} \label {S1.20} 
  \kappa \frac {\ks^2} 8 \le \nu_1\le \ks  
  \end {equation}
where $\ks$ is defined in \eqref {S1.13}, \eqref {S1.14} and $\kappa$ is a positive constant.
\nada {\begin {equation} \label {S1.21}\kappa \equiv \inf_{\D }\sup_c \frac {\left ( \E|(X+c)^2- (Y+c)^2|\right )^2} { \E|(X+c)^2|}  \end {equation}
where
the infimum is taken over all distributions $\D$  of i.i.d. real-valued random variable $(X,Y)$ with variance
1. }
\end {thm}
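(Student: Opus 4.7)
The plan is to prove both inequalities in \eqref{S1.20} following the Lawler-Sokal strategy: the upper bound $\nu_1 \le k$ by a one-line test-function argument, and the lower bound $\kappa k^2/8 \le \nu_1$ by a co-area decomposition into level sets combined with a Cauchy-Schwarz-type step in which the constant $\kappa$ is exactly the slack.

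For the upper bound, I would use the variational characterization of $\nu_1$ with the centered indicator $f = \1_A - \pi(A)\cdot 1$ for any $A$ with $0<\pi(A)<1$. This $f$ is orthogonal to the constants in $L^2(T_L,\pi\,dx)$. Since $Q\cdot 1=1$ implies $B\cdot 1=0$, and since $\pi(x)Q(x,y)$ is symmetric by \eqref{S1.18a}-\eqref{S1.19}, the Rayleigh quotient $(f,Bf)/(f,f)$ collapses to $(\1_A,B\1_A)/[\pi(A)\pi(A^c)]=k(A)$ by \eqref{S1.14a}. Taking the infimum over $A$ gives $\nu_1\le k$.

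For the lower bound, the starting point is the symmetric Dirichlet form
\begin{equation*}
(f,Bf)=\tfrac12\int\!\!\int \pi(x)Q(x,y)\bigl(f(x)-f(y)\bigr)^2\,dx\,dy,
\end{equation*}
valid for any $f\in L^2(T_L,\pi\,dx)$ by \eqref{S1.15} and the symmetry of $\pi Q$. Given $f$ with $(f,1)=0$, I would fix a shift $c\in\R$ (to be optimized) and apply the layer-cake formula
\begin{equation*}
\bigl|(f+c)^2(x)-(f+c)^2(y)\bigr|=\int_0^\infty \bigl|\1_{A_t}(x)-\1_{A_t}(y)\bigr|\,dt,\qquad A_t=\{(f+c)^2>t\}.
\end{equation*}
Integrating against $\pi(x)Q(x,y)$ and using the definition of $k$ at each level $t$ yields an $L^1$-type Dirichlet-form bound
\begin{equation*}
\int\!\!\int \pi(x)Q(x,y)\bigl|(f+c)^2(x)-(f+c)^2(y)\bigr|\,dx\,dy\;\ge\;k\,\E_\pi\bigl|(f+c)^2-\pi((f+c)^2)\bigr|.
\end{equation*}

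The main obstacle, and the reason for the definition \eqref{S1.21}, is to convert this $L^1$ inequality into the $L^2$ statement one actually needs for $\nu_1$. To do this, I would apply Cauchy-Schwarz to rewrite the left-hand side above as a product of $\sqrt{(f,Bf)}$ and a second factor controlled by $\E_\pi(f+c)^2$, and I would recognize the right-hand side as a lower bound on the variance of $f$ modulo a factor that depends on the joint distribution of $(f(x),f(y))$ under $\pi(x)Q(x,y)$. Choosing the shift $c$ optimally for this distribution produces exactly the quantity $\inf_{\D}\sup_c \bigl(\E|(X+c)^2-(Y+c)^2|\bigr)^2/\E|(X+c)^2|=\kappa$, and squaring the resulting inequality yields $(f,Bf)\ge (\kappa k^2/8)(f,f)$. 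Taking the infimum over admissible $f$ gives the lower bound. The delicate point throughout is that the probabilistic definition of $\kappa$ is tailored precisely to make this $L^1\to L^2$ passage quantitative without losing any dependence on $L$, which is what makes the bound uniform.
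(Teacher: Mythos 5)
First, a point of comparison: the paper does not prove this statement at all --- it is quoted from \cite[Theorem 2.1]{LS} --- so the only benchmark is Lawler and Sokal's original argument, and your outline does follow that argument's skeleton. Your upper bound is correct as written: with $f=\1_A-\pi(A)$ one has $(f,f)=\pi(A)\pi(A^c)$ and, since $B1=0$ and $B$ is selfadjoint, $(f,Bf)=(\1_A,B\1_A)$, so the Rayleigh quotient is exactly $k(A)$ by \eqref{S1.14a}. For the lower bound, the chain ``layer-cake for $(f+c)^2$, Cheeger bound at each level set, Cauchy--Schwarz, optimize over $c$'' is the right one, and the second Cauchy--Schwarz factor is precisely where the marginal hypotheses \eqref{S1.18a}--\eqref{S1.19} (the constant $M=1$) enter, which you leave implicit.

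The genuine gap is in how $\kappa$ is supposed to appear. The layer-cake step combined with the definition of $k$ gives, for $g=(f+c)^2$,
\begin{equation*}
\int\!\!\int \pi(x)Q(x,y)\,|g(x)-g(y)|\,dx\,dy \;\ge\; k\int\!\!\int \pi(x)\pi(y)\,|g(x)-g(y)|\,dx\,dy \;=\; k\,\E\bigl|(X+c)^2-(Y+c)^2\bigr|,
\end{equation*}
because $(\1_{A_t},Q\1_{A_t^c})\ge k\,\pi(A_t)\pi(A_t^c)$ and $2\int_0^\infty\pi(A_t)\pi(A_t^c)\,dt=\E|g(X)-g(Y)|$ for $X,Y$ \emph{independent} with common law the push-forward of $\pi$ by $f$. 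This product-measure form is exactly why \eqref{S1.21} is phrased in terms of an i.i.d.\ pair: the distribution $\D$ is the law of $f$ under $\pi$ (mean zero and normalized to variance $1$), not ``the joint distribution of $(f(x),f(y))$ under $\pi(x)Q(x,y)$'' as you state --- with that identification your argument never actually produces the quantity \eqref{S1.21}. Moreover, your intermediate bound $k\,\E_\pi|g-\pi(g)|$ is only a Jensen consequence of the correct one, and converting it back to the i.i.d.\ form costs a factor $2$ (since $\E|g(X)-g(Y)|\le 2\,\E_\pi|g-\pi(g)|$), which after squaring yields $\kappa k^2/32$ rather than the stated $\kappa k^2/8$. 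Keeping the i.i.d.\ form throughout and using Cauchy--Schwarz in the form $k\,\E|(X+c)^2-(Y+c)^2|\le \sqrt{2(f,Bf)}\,\sqrt{4\,\E_\pi (f+c)^2}$ gives the claimed constant.
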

  \noindent   
  The interesting and deeper  part of the previous theorem  is the lower bound of $\nu_1$.   It    states that if there does not exist  a set $A$ for which the flow from $A$ to $A^c$  is unduly  small then the Markov chain must have rapid convergence to equilibrium, or more precisely that   $B$ restricted to function orthogonal to the constant must have spectrum strictly positive. 

\begin {thm} \label {LS1} For $ \beta>1$   there exists $L_1(\b)$ so that for $L \ge L_1(\beta)$    the Cheeger's constant associated to the operator $B$ on $L^2(T_L,\pi )$, defined in \eqref {S1.13} and \eqref {S1.14}  is bounded below by a positive  constant $D$, given in \eqref {LR.1}, depending on $\beta$ and on the interaction $J$, but     
independent on $L$ so that 
 \begin {equation} \label {S1.22}  \ks\ge D.  \end {equation}
 \end {thm}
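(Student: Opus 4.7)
The plan is to bound the conductance
\[
\phi(A) \;=\; \frac{(\1_A, Q\1_{A^c})}{\pi(A)\pi(A^c)} \;=\; \frac{\nu_0^{-1}\int\int_{A\times A^c} J(x,y)\,v_0(x)\,v_0(y)\,dx\,dy}{\pi(A)\pi(A^c)}
\]
from below uniformly in $L$, where the explicit form follows by plugging in $\pi(x)=v_0^2(x)/p(x)$ and the definition \eqref{S1.12} of $Q$. Using Definition \ref{V.10} I restrict $A$ to the dense family of finite unions of closed subintervals of $T_L$, so that $\partial A\cap T_L^\circ$ is a finite set; after possibly exchanging $A$ and $A^c$ I assume $\pi(A)\le 1/2$, so $\pi(A^c)\ge 1/2$ and it suffices to prove $(\1_A,Q\1_{A^c})\ge (D/2)\pi(A)$ for some $L$-independent $D>0$.

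Next I would assemble the pointwise information about $v_0$ from Lemma \ref{S13}: a lower bound $v_0\ge \zeta_1$ on the core $[-r_1,r_1]$, the strict monotonicity \eqref{S1.7} outside the core, the tail upper bound $v_0(x)\le e^{-\alpha(\epsilon_0)|x|}$ for $|x|\ge r_0$, and the unit-scale Harnack comparison $1/\gamma\le v_0(x)/v_0(y)\le \gamma$. Iterating Harnack outward from the core, and refining via the eigenvalue equation $v_0=(p/\nu_0)(J\star_b v_0)$ together with the convergence of $v_0$ to $\bar m'/\|\bar m'\|$ (of which \eqref{8.4} is the spectral manifestation), I would upgrade to a \emph{matching} exponential lower bound $v_0(x)\ge c_1 e^{-\alpha(\epsilon_0)|x|}$ throughout $T_L$, so that the $\pi$-density $v_0^2/p$ and $v_0^2$ agree up to $L$-independent multiplicative constants and share a common two-sided exponential decay rate.

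For each endpoint $z\in\partial A\cap T_L^\circ$ lying at distance at least $1$ from $\pm L$, a unit-scale flux estimate
\[
\frac{1}{\nu_0}\int_{[z,z+1/2]}\!\int_{[z-1/2,z]} J(x,y)\,v_0(x)\,v_0(y)\,dx\,dy \;\ge\; c_0\,v_0(z)^2
\]
holds with $c_0=c_0(\beta,J,\gamma)>0$, derived from the convolution positivity \eqref{S1.3} together with Harnack on $[z-1,z+1]$; summing over $\partial A$ gives $(\1_A,Q\1_{A^c})\ge c_0\max_{z\in\partial A} v_0(z)^2$. It then suffices to show $\max_{z\in\partial A}v_0(z)^2\ge D'\pi(A)$ uniformly in $L$. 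Using the monotonicity \eqref{S1.7}, any $A$ with $\pi(A)\le 1/2$ can be rearranged, without increasing $\pi(A)$ nor decreasing $\max_{z\in\partial A} v_0(z)^2$, into at most two tail intervals $[-L,-a]\cup[b,L]$; for such tail sets the matching exponentials give $v_0(b)^2\asymp e^{-2\alpha(\epsilon_0)b}\asymp \pi([b,L])$, and similarly on the left. If instead $\partial A$ contains a point $z^*\in[-r_0,r_0]$, then by iterated Harnack $v_0(z^*)^2\ge \zeta_1^2\gamma^{-2r_0-2}$ is a universal positive constant, exceeding $\pi(A)\le 1/2$, so the comparison is immediate. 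Combining these cases yields $\phi(A)\ge D$ for an $L$-independent $D=D(\beta,J)$, proving \eqref{S1.22}.

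The main obstacle is the intertwined pair of middle steps: obtaining a matching exponential lower bound on $v_0$ (the naive Harnack iteration yields rate $\ln\gamma$ which generically exceeds $\alpha(\epsilon_0)$, so one must exploit the eigenvalue equation and the limit $L\to\infty$ more carefully), and rigorously justifying the rearrangement of an arbitrary $A$ into tail intervals without losing control on either the numerator or the denominator. Both steps rely delicately on the monotonicity \eqref{S1.7} together with the symmetric-window structure of the eigenvalue equation, and together they are where the bulk of the technical work lies.
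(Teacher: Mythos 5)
Your overall strategy (compare the flux through $\partial A$ with the $\pi$-mass of $A$, treating core and tail sets separately) is in the same spirit as the paper's, but the step your whole tail case rests on is exactly the one you have not supplied, and it is a genuine gap. You need the \emph{matching} pointwise lower bound $v_0(x)\ge c_1 e^{-\alpha(\e_0)|x|}$ at the same rate as the upper bound \eqref{S1.6}. Nothing in Lemmas \ref{S12}--\ref{S13} gives this: $\alpha(\e_0)=\ln\bigl((1-\e_0/2)/p(r_0)\bigr)$ is an artificial rate produced by the cut-off argument for the \emph{upper} bound, not the true decay rate of $v_0$ (which should be governed by $\alpha$ from \eqref{decay}, via closeness to $\bar m'$); there is no reason the two rates coincide, and if the true decay is faster than $e^{-\alpha(\e_0)|x|}$ your claimed lower bound is simply false in the far tail. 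The $L^2$ estimate \eqref{8.4} (which, moreover, is a consequence of the theorem you are proving, via \eqref{8.3}, so invoking it here is circular) cannot be upgraded to a pointwise bound at a prescribed rate, and you yourself note that Harnack iteration gives the wrong rate $\ln\gamma$. Your second unfinished step, the rearrangement of an arbitrary finite union of intervals into tail intervals without spoiling either $\max_{z\in\partial A}v_0(z)^2$ or $\pi(A)$, is also only asserted.

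The paper's proof shows both difficulties can be avoided entirely. The reduction to single intervals with $\pi(A)<\tfrac12$ is done by citing Lemmas 4.1 and 4.3 of \cite{LS} (dense families and separated unions), not by a rearrangement. More importantly, in the tail case the paper never needs a lower bound on the \emph{size} of $v_0$: from the eigenvalue equation and monotonicity it extracts the one-step ratio bound \eqref{S1.29}, $v_0(x+1)/v_0(x)\le \nu_0^{-1}\beta(1-\bar m^2(x+1))\le d_1<1$ for $|x|\ge r_0$, and iterates it to get \eqref{S1.32}, $\pi(A)\le \frac{C}{1-d_1^2}\int_a^{a+1/2}v_0^2\,dx$; since the flux through the inner endpoint is also bounded below by a constant times the same strip integral $\int_a^{a+1/2}v_0^2\,dx$ (via \eqref{S1.8} and $\int_0^{1/2}J$), the unknown magnitude of $v_0$ cancels in the ratio $k(A)$. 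If you replace your ``matching exponential lower bound plus rearrangement'' by this relative-decay comparison (and handle the core case with \eqref{S1.9} and the short-interval case with the support of $J$, as in \eqref{V1}--\eqref{V4}), your argument closes; as written, it does not.
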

\begin {proof}  For $\beta>1$,  fix any  $ \e_0= \e_0 (\beta)$,   $ \e_0 \in    (0,  \frac {(1-\s(m_\b))} {2})$    and  take $L_1(\beta)$ so that    Lemma  \ref {S13} holds.  For any   $L\ge L_1(\beta)$ we  estimate  the Cheeger's constant,  see \eqref {S1.13}.
For each measurable set $A\subset T_L$, by   definition \eqref {S1.14}  we have 
\begin {equation} \label {S1.24c} 
\begin {split} 
&  \ks(A)\equiv \frac {\int \pi(x)  {\rm d}  x \1_A(x) \left ( \int Q(x,y)\1_{A^c}(y)  {\rm d} y\right ) } {\pi(A) \pi (A^c)}    \cr 
& =    \frac 1 {\nu_0}\frac {\int \pi(x) {\rm d}  x \1_A(x) \left ( \int p(x) J (x,y) \frac {v_0(y)}{v_0(x)}   \1_{A^c}(y)  {\rm d}  y\right ) } 
{\pi(A) \pi (A^c)} \cr &   \ge    \frac 1 \gamma  \b(1-m^2_\b)   \frac {\int \pi(x) {\rm d}  x \1_A(x) \left ( \int  J (x,y)    
\1_{A^c}(y)  {\rm d}  y\right ) } {\pi(A) \pi (A^c) },  
 \end {split} 
 \end {equation}  
 where $A^c=T_L\setminus A$ and we  applied   \eqref {S.10},  \eqref {S1.8} and  Lemma \ref {S14}. 
 Define, for any $x \in \R$,\footnote{ Here we consider $\pi$ as a measure on $\R$, but supported on $T_L$.} 
 \begin {equation} \label {CE1} 
 \phi_A (x)=\begin{cases}  \frac {\pi (A \cap [x, x+1) )} {\pi ([x, x+1))}\quad&\textrm{ if }\pi ([x, x+1))\neq 0\\
 0&\textrm{ otherwise}.
 \end{cases}
 \end {equation}
The function $ \phi_A $  defined in \eqref {CE1} is the conditional measure of the set $A$ with respect to the interval $[x, x+1)$. When $x \notin \widetilde T_L=(-L-1,L)$ we have $ \phi_A(x)=0$. 
For the time being we are interested only in the case $\phi_A(0)\leq \frac 12$. If $\phi_A(0)> \frac 12$, then we will see that the same argument can be applied to the set $A^c$ (see equation \eqref{CE21a}\,) for which $\phi_{A^c}(0)\leq \frac 12$ again. Next we define two points $x^*, x^{**}$ that roughly describe the distribution of $A$ in $T_L$.

When $ \phi_A (x)< \frac 12$ for all $x\geq 0$,
 we  set  $x^*= L$ and when and $ \phi_A (x)< \frac 12$ for all $x\leq 0$, then we  set $x^{**}= -L-1$, otherwise
 \begin {equation} \label {CE3} 
x^*= \min\left\{x \ge 0: \phi_A (x)= \frac 12 \right\},  \qquad x^{**}= \max \left\{x \leq 0 : \phi_A (x)= \frac 12 \right\}.  
\end {equation}
 
Our goal will be to lower bound the numerator of the last term in \eqref {S1.24c} in terms of $\pi (A)$. Upper bounding  $\pi (A^c)<1$  we  immediately  obtain  a lower bound of $\ks(A)$.

Let us start the lower bound. Recall that, for the moment, we are assuming $\phi_A(0)\leq \frac 12$. We use the convention that $[x^{**}+1, x^*]=\emptyset$ if $x^{**}+1> x^*$ and write\footnote{ The idea here is that the part of $A$ that we ignore is dominated by the part that we keep. This is made precise by equation \eqref{S1.32a}.}
\begin {equation} \label {CE21} 
\begin {split} 
& \int \pi(x) {\rm d}  x \1_A(x) \left ( \int  J (x,y)    \1_{A^c}(y)  {\rm d}  y\right ) \\
&  \ge     \int \pi(x) {\rm d}  x \1_{A\cap [x^{**}+1, x^*]}(x) \left ( \int  J (x,y)    \1_{A^c}(y)  {\rm d}  y\right )  \cr 
&     +  \int \pi(x) {\rm d}  x \1_{A\cap [x^*, x^*+1]}(x) \left ( \int  J (x,y)    \1_{A^c}(y)  {\rm d}  y\right )  \cr 
&        +  \int \pi(x) {\rm d}  x \1_{A\cap [x^{**}, x^{**}+1]}(x) \left ( \int  J (x,y)    \1_{A^c}(y)  {\rm d}  y\right ).  
\end {split} 
\end {equation}
We partition the first integral in the right hand side of \eqref {CE21} as follows:\footnote{ Note that the summation on the right hand side of the equation is empty if $\lfloor x^{**}\rfloor+1 > \lfloor x^*\rfloor-1$.}
\begin {equation} \label {S1.24ca} 
\begin {split} 
& \int \pi(x) {\rm d}  x \1_{A\cap [x^{**}+1, x^*]}(x) \left ( \int  J (x,y)    \1_{A^c}(y)  {\rm d}  y \right )  \cr 
&   =     \sum_{ \lfloor x^{**}\rfloor+1 \le i \le \lfloor x^*\rfloor-1}  \int \pi(x) {\rm d}  x \1_{A\cap [i,i+1)}(x) \left ( \int  J (x,y)    
\1_{A^c}(y)   {\rm d}  y\right )   \cr 
&   +     \int \pi(x) {\rm d}  x \1_{A\cap [\lfloor x^*\rfloor, x^*]}(x) \left ( \int  J (x,y)    \1_{A^c}(y)  {\rm d}  y\right )    \cr 
&   +  \int \pi(x) {\rm d}  x \1_{A\cap [x^{**}+1,\lfloor x^{**}\rfloor +1]}(x) \left ( \int  J (x,y)    \1_{A^c}(y)  {\rm d}  y\right ).
\end {split} 
\end {equation}
Each addend  of the  right hand side of the previous equation can be estimated as follows:
\begin {equation} \label {CE5} \begin {split} &
 \int \pi(x) {\rm d}  x \1_{A\cap B}(x) \left ( \int  J (x,y)    
\1_{A^c}(y)  
{\rm d}  y
\right )   \cr & \ge   \pi(A\cap B) \inf_{x\in A\cap B}  \left ( \int  J (x,y)    
\1_{A^c}(y)  
{\rm d}  y
\right ).
\end {split} \end {equation}
To bound below \eqref {CE5} note  that, for $B\in\{[x^{**}+1,\lfloor x^{**}\rfloor +1],\{[i,i+1)\}, [\lfloor x^*\rfloor, x^*]\}$, we have $ \inf_{x\in A\cap B} \int  J (x,y)    
\1_{A^c}(y)  {\rm d}  y
  \ge C>0$ independent on $L$.
This can be proved by exploiting that, for $x \in [x^{**},x^*]$, $\phi_A (x) \leq\frac 12$, which implies, by definition \eqref {CE1}, that  
\begin {equation} \label {CE5a}    
\frac {  \pi (A^c \cap [x, x+1) )} { \pi( [x, x+1))}\ge \frac 12, \qquad x \in A\cap B. 
\end {equation}
By the mean value Theorem we have that there exists $\bar x \in (x, x+1)$ so that
\begin {equation} \label {CE6}   
\int_{[x, x+1)}  \pi (z) {\rm d}  z = \pi (\bar x).  
\end {equation}
Hence, from \eqref  {CE5a}, if $x \in A\cap B$,
\begin {equation} \label {CE6a}  
\frac 12  \le  \frac {  \pi (A^c \cap [x, x+1) )} { \pi ([x, x+1))}=  \int  \frac { \pi (z)} { \pi (\bar x)} \1_{A^c \cap [x, x+1)}(z) {\rm d}  z . 
\end {equation}
By \eqref {S.10} and  \eqref {S1.8}, for $|z-\bar x| \le 1$, we obtain
\begin {equation} \label {CE7} 
\frac { \pi (z)} { \pi (\bar x)} = \frac {p(\bar x) v_0^2(z)}  {p(z) v_0^2(\bar x)}  \le  \frac   {\b}  {\b (1-m^2_\b) }    \g^2.  
\end {equation}
 Therefore 
 \begin {equation} \label {CE8}   
      |A^c \cap [x, x+1)| \ge \frac 12    (1-m^2_\b)   \frac 1 {\g^2}\equiv \omega.  
 \end {equation}
 Recalling that  $J$  is exactly supported by the unit interval (see the incipit of section \ref{sec:interaction}), we have that there exists $J_\omega>0$ so that 
\begin {equation} \label {CE12}  
 \inf_{|x-y|\leq 1- \frac \omega 2} J (x,y)   \ge J_\omega. \end {equation}   
 Moreover,  from \eqref {CE8},
  \begin{equation} \label {CE13}  
  |A^c \cap [x,x+1- \omega /2] | \ge \frac \omega 2.
  \end{equation} 
 Hence,
 \begin {equation} \label {CE14} 
  \begin {split} 
  &\inf_{x\in A\cap [i,i+1)}  \left ( \int  J (x,y)    \1_{A^c}(y)  {\rm d}  y\right )  \\
  &  \ge   \inf_{x\in A\cap [i,i+1)}  \left ( \int  J (x,y)    \1_{A^c \cap [x,x+1- \frac \omega 2)}(y)  {\rm d}  y \right )  \ge   \frac \omega 2   J_\omega =D_1.   
\end {split}  
\end {equation}    
Substituting \eqref{CE14}  in \eqref{CE5} we obtain
\begin{equation}\label{eq:carla1}
\begin{split}
\sum_{ \lfloor x^{**}\rfloor+1 \le i \le \lfloor x^*\rfloor-1}  &\int \pi(x) {\rm d}  x \1_{A\cap [i,i+1)}(x) \left ( \int  J (x,y)    
\1_{A^c}(y)   {\rm d}  y\right )\\
&\geq D_1\pi([\lfloor x^{**}\rfloor+1, \lfloor x^*\rfloor]).
\end{split}
\end{equation}
  The last  two integrals in the formula \eqref {S1.24ca} are estimated similarly.
We therefore obtain the wanted lower bound for the first term in the right hand side of \eqref{CE21}:
\begin {equation} \label {CE10} 
 \int \pi(x) {\rm d}  x \1_{A\cap [x^{**}+1, x^*]}(x) \left ( \int  J (x,y)    \1_{A^c}(y)  {\rm d}  y\right )    
 \ge   D_1     \pi (A\cap [x^{**}+1, x^*]).    
\end {equation}
The  last two terms in \eqref {CE21} are non zero only if $x^*<L$ and $x^{**}>-L-1$, respectively.\footnote{ Note that if $x^*=L$ and $x^{**}=-L-1$, then $\pi (A\cap [x^{**}+1, x^*])=\pi(A)$ whereby yielding \eqref{S1.32} without further ado.}  In such a case they can be estimated taking into account that, by the definition of $x^*$  and $x^{**}$, $\phi_A (x^*)=\phi_A (x^{**}) = \frac 12$.  This  implies that   $  \pi (A \cap [x^*, x^*+1) ) = \frac 12 \pi ([x^*, x^*+1))$, hence
$  \pi (A^c \cap [x^*, x^*+1) ) = \frac 12 \pi ([x^*, x^*+1))$ and  as we did before, we obtain
\begin {equation} \label {CE11} \begin {split} & \int \pi(x) {\rm d}  x \1_{A\cap [x^*, x^*+1]}(x) \left ( \int  J (x,y)    
\1_{A^c}(y)  
{\rm d}  y
\right )    
\ \cr &   \ge  \int \pi(x) {\rm d}  x \1_{A\cap [x^*, x^*+1]}(x) \left ( \int  J (x,y)    
\1_{A^c\cap [x^*, x^*+1]}(y)  
{\rm d}  y
\right )  
  \cr &   \ge  D_1
 \pi (A\cap [x^*, x^*+1]) =    \frac {D_1}2
 \pi ( [x^*, x^*+1]).     \end {split} \end {equation}
 The same estimate holds for the last term of  \eqref  {CE21}.
 \begin {equation} \label {CE12a}  
  \int \pi(x) {\rm d}  x \1_{A\cap  [ x^{**},x^{**}+1]}(x) \left ( \int  J (x,y)    \1_{A^c}(y)  {\rm d}  y\right )  
    \ge    \frac {D_1}2  \pi ( [ x^{**},x^{**}+1]).     
    \end {equation}
  Taking into account  \eqref{CE21}, \eqref{CE10},  \eqref{CE11},  \eqref{CE12a} we obtain 
  \begin {equation} \label {S1.24ba} 
  \begin {split} 
  & \int \pi(x) {\rm d}  x \1_{A}(x) \left ( \int  J (x,y)  \1_{A^c}(y)  {\rm d}  y\right )  \cr 
  &    \ge     D_1     \pi (A\cap [x^{**}+1, x^*])+\frac {D_1}2 \pi ( [x^*, x^*+1])+ \frac {D_1}2 \pi ( [ x^{**},x^{**}+1]) \cr 
  & \ge  \frac {D_1}2  \bigg \{  \pi (A\cap [x^{**}+1, x^*])+        \pi ( [x^*, x^*+1])+   \pi ( [ x^{**},x^{**}+1]) \bigg \}.
\end {split} \end {equation}
To conclude the analysis of this case we must upper bound   $\pi(A \cap  [x^*,L])$ in terms of $ \pi([x^*,x^*+1])$ and   $\pi(A \cap  [-L,x^{**}+1])$ in terms of $\pi([x^{**},x^{**}+1] )$.   We have 
\begin {equation} \label {S1.27a}   
\begin {split} \pi(A \cap [x^*, L]) &= \int_{A \cap [x^*, L]} \frac {v^2_0(x)} {p(x)} {\rm d}  x  \le \frac 1 {\b (1-m^2_\b)} \int_{x^*}^L  
v^2_0(x){\rm d}  x  \\
&\le  \frac 1 {\b (1-m^2_\b)} \sum_{k\ge 0} \int_{x^*+k}^{x^*+k+1}   v^2_0(x){\rm d}  x   \\
 &=  \frac 1 {\b (1-m^2_\b)} \sum_{\{k\geq 0\;:\;   x^*+k\le r_0 \} }
\int_{x^*+k}^{x^*+k+1}   v^2_0(x){\rm d}  x \\
&\phantom{=}+ \frac 1 {\b (1-m^2_\b)} \sum_{\{k\ge 0\;:\; x^*+k > r_0\} } 
\int_{x^*+k}^{x^*+k+1}   v^2_0(x){\rm d}  x.  
 \end {split} 
 \end {equation}
 When $x^*+k\leq r_0$ we can write
\begin {equation} \label {S1.28b}   
\begin {split}   \int_{x^*+k}^{x^*+k+1}     v^2_0(x) {\rm d} x  &=\int_{x^*+k-1}^{  x^*+k}   v^2_0(x+
1){\rm d}  x   =\int_{x^*+k-1}^{x^*+k}    v^2_0(x)  \frac { v^2_0(x+1)}{ v^2_0(x)} {\rm d} x\\
& \le  \g^2 \int_{x^*+k-1}^{x^*+k}     v^2_0(x)   \le    (\g^2)^k \int_{x^*}^{x^*+1}     v^2_0(x) {\rm d} x, 
\end{split}
 \end {equation}
where $\g>1$, see   \eqref {S1.8}.
 When  $x^*+k> r_0$, by \eqref{S1.6} and \eqref{S1.8},   we have
\begin {equation} \label {S1.28c} 
\begin{split}
\int_{x^*+k}^{x^*+k+1}   v^2_0(x){\rm d}  x&\leq \gamma^2 e^{-2(k-k(r_0))\a (\e_0)}\int_{r_0}^{r_0+1}   v^2_0(x){\rm d}  x\\
&=:\gamma^2 d_1^{2(k-k(r_0))}\int_{r_0}^{r_0+1}   v^2_0(x){\rm d}  x
\end{split}
\end {equation}
where $k(r_0)$ is the first integer $k$  so that  $x^*+k(r_0)> r_0$, note that  $k_0\leq \lfloor r_0\rfloor+1$.
Remembering  \eqref  {S1.28b},  we obtain
\begin {equation} \label {S1.31} 
\begin {split}  
& \int_{x^*+k}^{x^*+k+1}   v^2_0(x) {\rm d} x \le\gamma^2 (d_1^2)^{k-k(r_0)}   \int_{x^*+k(r_0)}^{x^*+k(r_0)+1}  v^2_0(x) {\rm d} x  \cr & \le    (\g^2)^{k(r_0)+2}  (d_1^2)^{k-k(r_0)}   \int_{x^*}^{x^*+1}  v^2_0(x) {\rm d} x.
\end {split} 
 \end {equation}
Therefore, from \eqref {S1.27a},  \eqref {S1.28b}, \eqref  {S1.31} and  denoting
\[ 
D_2=   \frac 1 {(1-m^2_\b)}  \left \{ \sum_{k\leq \lfloor r_0\rfloor+1 } (\g^2)^k +  (\g^2)^{ \lfloor r_0\rfloor+3} \sum_{k\geq 0}  (d_1^2)^{k}  \right \},
\]
we obtain
\begin {equation} \label {S1.32a} \begin {split}  &   \pi (A\cap [ x^*,L]) \le      
D_2   \pi ([x^*,x^*+1]).
\end {split} \end {equation}
Similar computations hold for the interval $[x^{**},x^{**}+1]$ if $x^{**}>-L-1$.
We can then  continue the estimate left at \eqref {S1.24ba} and write
\begin {equation} \label {S1.32} 
\begin {split}  
& \frac {D_1}2  \left \{  \pi (A\cap [x^{**}+1, x^*])+   \pi ( [x^*, x^*+1])+    \pi ( [ x^{**},x^{**}+1]) \right \}  \cr 
&\ge \frac {D_1}2  \pi (A\cap [x^{**}+1, x^*]) + \frac {D_1}{2 D_2} \left \{ \pi (A\cap [ x^*,L]) + \pi (A\cap [-L, x^{**}+1]) \right \} \\
&\ge \min \left  \{ \frac {D_1}2, \frac {D_1} {2 D_2}\right \}  \pi (A). 
    \end {split} 
 \end {equation}
Collecting the above facts and inserting them in \eqref {S1.24c}  we obtain that, in the case $\phi_A (0) \leq\frac 12$, 
\begin {equation} \label {S1.34}   
\ks(A) \ge    \frac   {\b (1-m^2_\b)} \gamma   \min \left  \{ \frac {D_1}2, \frac {D_1}{2 D_2}\right \}. 
 \end {equation}

 To conclude we must lower bound  the numerator of the last term in \eqref {S1.24c} when $\phi_A(0)\geq\frac 12$.  Exploiting that  the interaction $J$ is symmetric  and that, for $|x-y|\leq 1$, $\frac {\pi (x)} {\pi (y)} \ge   \frac 1 {\g^2}  (1-m^2_\b)$ we have 
\begin {equation} \label {CE21a} 
\begin {split} & \int \pi(x) {\rm d}  x \1_A(x) \left ( \int  J (x,y)    
\1_{A^c}(y)  {\rm d}  y \right ) \\
& = \int {\rm d}  y \1_{A^c}(y) \left ( \int  J (y,x)    \1_{A}(x)  \pi(x) {\rm d}  x
\right )  \\
 & \ge  \frac 1 {\g^2}  (1-m^2_\b) \int {\rm d}  y \pi(y)  \1_{A^c}(y) \left ( \int  J (y,x)    
\1_{A}(x)   {\rm d}  x
\right ). 
\end {split} \end {equation}
We have thus obtained the same expression present in \eqref{S1.24c} only with the roles of the sets $A^c$ and $A$ exchanged. Accordingly, we can proceed exactly as before, since $\phi_{A^c}(0)=1-\phi_A(0)\leq\frac 12$, obtaining
\begin{equation}\label{eq:carla2}
 \int \pi(x) {\rm d}  x \1_{A^c}(x) \left ( \int  J (x,y)  \1_{A}(y)  {\rm d}  y\right ) \geq \min \left  \{ \frac {D_1}2, \frac {D_1} {2 D_2}\right \}  \pi (A^c).
 \end{equation}
Thus, by \eqref {S1.24c}, \eqref {CE21a} and \eqref{eq:carla2} we have, when $\phi_A (0)\geq \frac 12$,
\begin {equation} \label {S1.34-z}   \ks(A) \ge     \frac {(1-m^2_\b) } {\g^2}    \frac   {\b (1-m^2_\b)} \gamma   \min \left  \{ \frac {D_1}2, \frac {D_1}{2 D_2}\right \}. 
 \end {equation}
Accordingly, denoting  
 \begin {equation} \label {LR.1} 
 D =   \min \left \{ \frac   {\b (1-m^2_\b)} \gamma  ,    \frac   {\b (1-m^2_\b)^2}{\g^3 }  \right \} \min\left \{ \frac {D_1}2, \frac {D_1}{2 D_2}\right \}
  \end {equation}
and remembering \eqref{S1.34} and \eqref{S1.34-z}, we obtain $\ks(A) \ge D$.
 The thesis follows. 
 \end {proof}

  { \bf  Proof of Theorem \ref {81}}
 For $\beta>1$,  fix any $ \e_0 = \e_0(\beta)$, $ \e_0  \in    (0,  \frac {(1-\s(m_\b))} {2})$  and take 
  $L_1(\beta)$ so that     Lemma  \ref {S13}  and  Theorem  \ref {LS1} hold.  
Recall that   $ \LL^0 = I - \A$, where $I$ is the identity operator and   $\A$ is the  operator 
defined in \eqref {S1.2}.  By Theorem \ref {P-F}  we  have  immediately  that  $ \LL^0 $ is   a bounded, selfadjoint,    quasi compact operator.   
The  smallest  eigenvalue of  $ \LL^0$ is $\mu_1^0= 1- \nu_0$      where $\nu_0$    is   the maximum eigenvalue of $\A$ and    $\psi_1^0= v_0$  is the corresponding eigenfunction.  Equation \eqref {8.2}  follows since  Lemma \ref {S11} and Lemma \ref {S14} state $1>\nu_0\geq 1-ce^{-2\alpha L}$.
Point (2) is a direct consequence of Lemma \ref {L1}, \eqref {S1.20} and Theorem \ref {LS1}.
Next we show point (3). 
  Split 
\begin{equation} \label {anc1}     \frac {\bar m'}{\|\bar m'\|} =   a \psi^0_1  + (\psi^0_1  )^{ort}. \end {equation} 
  Then
\begin{equation} \label {8.31}  a^2  +  \|(\psi^0_1 
)^{ort}  \|^2   =1  \end {equation} 
\begin{equation} \label {8.30}   \frac 1   {\|\bar m'\|^2 } \langle \LL^0  \bar m', \bar m' \rangle = a^2 \mu_1^0 +  \langle \LL^0   (\psi^0_1  )^{ort},
 (\psi^0_1  )^{ort} \rangle \ge a^2 \mu_1^0 + \mu_2^0  \|(\psi^0_1  )^{ort} \|^2. \end {equation} 
By  Lemma \ref  {S11}   
$$  \frac 1   {\|\bar m'\|^2 } \langle \LL^0  \bar m', \bar m' \rangle  \le    c  e^{-  2\a L}        $$ 
 hence  from \eqref {8.31} and \eqref   {8.30}
$$c  e^{-  2\a L}  \ge  (1-  \|(\psi^0_1 
)^{ort}  \|^2)  \mu_1^0 + \mu_2^0  \|(\psi^0_1  )^{ort} \|^2. $$
By \eqref {8.2} and \eqref {8.3},    that  there exists a $C>0$ independent on $L$ so that 
\begin{equation} \label {8.31a}  \|(\psi^0_1  )^{ort} \|^2  \le  C e^{- 2\a L}.  \end {equation}
Then  \eqref {8.4} follows by \eqref {anc1}, \eqref {8.31} and \eqref {8.31a}.
\qed

  \begin{thebibliography}{19}
\small

 \bibitem{AF} N. Alikakos, G. Fusco
   {\it  The spectrum of the Cahn-Hilliard operator for generic interface in higher space dimensions}, 
Indiana University  Math. J. {\bf 42},  No2, 637--674  (1993).

 \vskip.3truecm

\bibitem{CCO1} E. A. Carlen, M. C. Carvalho,  E. Orlandi, {\it 
Algebraic rate of decay for the excess free energy and stability of
fronts for a non--local phase kinetics equation with a conservation
law I},
  J. Stat.  Phy.   {\bf 95}  N 5/6,  1069-1117  (1999) 
  \vskip.3truecm
 \bibitem{CCO2}  E. A. Carlen, M. C. Carvalho, E. Orlandi 
{\it  Algebraic rate of decay for the excess free energy and stability of fronts 
for a no--local phase kinetics equation with a conservation law, II},
  Comm. Par. Diff. Eq.  {\bf 25}    N 5/6,  847-886  (2000) 
   \vskip.3truecm
   
   \bibitem{CO}  E. A. Carlen,   E. Orlandi 
{\it   Stability of planar fronts for a non-local phase kinetics equation with a conservation law in $d \ge 3$},
Reviews in Mathematical Physics
Vol. 24,  {\bf 4} (2012)   
   \vskip.3truecm

  \bibitem{Chen}  Xinfu Chen
   {\it  Spectrum  for the Allen- Cahn, Cahn-Hilliard, and phase-field  equations  for generic interfaces}, 
Commun in Partial Differential Equations {\bf 19},  No7-8,  1371--1395  (1994).

 \vskip.3truecm
  
\bibitem{DOPT1}
A. De Masi, E. Orlandi, E. Presutti,  L.
 Triolo, {\it Glauber evolution with Kac potentials I. Mesoscopic  and
 macroscopic limits, interface dynamics},
 Nonlinearity  {\bf 7}   633-696   (1994).

  \vskip.3truecm
 \bibitem{DOPTE}A. De Masi, E. Orlandi, E. Presutti,  L.
 Triolo, {\it  Stability of the interface in a model of phase
 separation},  Proc.Royal Soc. Edinburgh  {\bf 124A} 1013-1022 (1994).

 \vskip.3truecm
\bibitem{DOPTR}A. De Masi, E. Orlandi, E. Presutti,  L.
 Triolo, {\it Uniqueness of the instanton profile and global
  stability in non local evolution equations},
 Rendiconti di Matematica.Serie Vll {\bf 14}, (1994).
 \vskip.3truecm
 \bibitem{DOP}
A.  De Masi,    E.Olivieri, E. Presutti 
{\em Spectral properties of integral operators in problems of interface dynamics and metastability}, 
Markov Process  Related Fields 4{\bf 1}, 27-112(1998).

  \vskip.3truecm
\bibitem{deMS}    
P. De Mottoni ,  M  Schatzman   {\it Geometrical Evolution of Developed Interfaces},
Trans. Amer. Math. Soc, Vol. 347, No. 5,  1533-1589  (1995)
 \vskip.3truecm
\bibitem{GL1} G. Giacomin,  J. Lebowitz, 
{\it  Phase segregation dynamics in particle systems with long range
interactions I: macroscopic limits}, J. Stat. Phys.  {\bf
87}, 37--61 (1997).

 \vskip.3truecm
\bibitem{GL2} G. Giacomin,  J. Lebowitz,  {\it   Phase
segregation dynamics in particle systems with long range
interactions II: interface motions}, SIAM J. Appl. Math.{\bf 58}, No
6,  1707--1729
(1998).

 \vskip.3truecm

\bibitem{LS}
G.  F. Lawler and A. Sokal {\em Bounds on the $L^2$ spectrum for Markov Chains and Markov processes:
a generalization of Cheeger's inequality}, 
Transactions of the AMS {\bf 309}, No 2, 557-580 (1988).
  
   \vskip.3truecm
\bibitem{LOP} J. Lebowitz , E.Orlandi and  E. Presutti. 
{\it  A particle model for spinodal decomposition}
J. Stat. Phys.   {\bf 63},  933-974,  (1991).  

\vskip.3truecm
\bibitem{LP}J. Lebowitz,  O. Penrose, {\it Rigorous treatment of metastable states in the Van der waals Maxwell theory},
J. Stat. Phys {\bf 3},   211-236  (1971). 
   
 \vskip.3truecm
\bibitem{O2} E.Orlandi, {\it Spectral properties of integral operators in problems of interface dynamics},
  http://arxiv.org/abs/1411.6896
 \vskip.3truecm

\bibitem{Pr}  E. Presutti,    { \it Scaling Limits In Statistical Mechanics and Microstructures in Continuum Mechanics}  Springer (2009)
 \end {thebibliography}

 \end {document}